\pgfplotsset{height=0.2\textheight, width=0.4\textwidth}
\newtheorem{theorem}{Theorem}[section]
\newtheorem{lemma}[theorem]{Lemma}
\newtheorem{proposition}[theorem]{Proposition}
\newtheorem{corollary}[theorem]{Corollary}
\newenvironment{proof}[1][Proof]{\begin{trivlist}
\item[\hskip \labelsep {\bfseries #1}]}{\end{trivlist}}
\newenvironment{remark}[1][Remark]{\begin{trivlist}
\item[\hskip \labelsep {\bfseries #1}]}{\end{trivlist}}
\newcommand{\qed}{\nobreak \ifvmode \relax \else
      \ifdim\lastskip<1.5em \hskip-\lastskip
      \hskip1.5em plus0em minus0.5em \fi \nobreak
      \vrule height0.75em width0.5em depth0.25em\fi}
\begin{document}
%
\title{Displacement Convexity -- A Useful Framework for the Study of Spatially Coupled Codes}

\author{\IEEEauthorblockN{Rafah El-Khatib}
\IEEEauthorblockA{EPFL, Switzerland\\
Email: rafah.el-khatib@epfl.ch}
\and
\IEEEauthorblockN{Nicolas Macris}
\IEEEauthorblockA{EPFL, Switzerland\\
Email: nicolas.macris@epfl.ch}
\and
\IEEEauthorblockN{Ruediger Urbanke}
\IEEEauthorblockA{EPFL, Switzerland\\
Email: ruediger.urbanke@epfl.ch}}

\maketitle

\begin{abstract}
Spatial coupling has recently emerged as a powerful paradigm to
construct graphical models that work well under low-complexity
message-passing algorithms. Although much progress has
been made on the analysis of spatially coupled models under message
passing, there is still room for improvement, both in terms of
simplifying existing proofs as well as in terms of proving
additional properties.

We introduce one further tool for the analysis, namely the concept
of displacement convexity. This concept plays a crucial role in the
theory of optimal transport and, quite remarkably, it is also well suited for the
analysis of spatially coupled systems. In cases where the concept applies,
displacement convexity allows functionals of distributions which
are not convex in the usual sense to be represented in an alternative form, so that
they are convex with respect to the new parametrization.
As a proof of concept we consider spatially coupled $(l,r)$-regular Gallager ensembles
when transmission takes place over the binary erasure channel. We show that the potential function of the coupled
system is displacement convex. Due to possible translational degrees
of freedom  convexity by itself falls short of establishing the
uniqueness of the minimizing profile. For the spatially coupled $(l,r)$-regular system strict displacement convexity holds when a global translation
degree of freedom is removed. Implications for the uniqueness of the minimizer and for solutions of 
the density evolution equation are discussed.

\end{abstract} \IEEEpeerreviewmaketitle

\section{Introduction}
Spatially coupled codes were introduced in the form of low-density
parity-check codes by Felstrom and Zigangirov in \cite{Zigangirov}.
Such codes are constructed by spatially coupling nearby replicas
of a code defined on a graph. It has been proven that such ensembles
perform very well under low-complexity message-passing algorithms.
Indeed, this combination achieves essentially optimal performance.
More generally, the concept of spatial coupling is proving to be
very useful not only for coding but also in compressive sensing, statistical physics, and
random constraint solving problems. Given this range of applications, it is
worth investigating basic properties of this construction in
generality. Our aim is to introduce one further tool for the analysis
of such systems -- namely the concept of displacement convexity.
Displacement convexity plays a crucial role in the theory of optimal
transport. But it is also very well suited as a tool for the analysis
of spatially coupled graphical models.


One of the most important properties of spatially coupled codes is
that they exhibit the so-called threshold saturation phenomenon.
That is, spatially coupled ensembles generically have a BP threshold
which is as large as the maximum-a posteriori (MAP) threshold of the
underlying ensemble, i.e., their threshold has {\em saturated} to
the largest possible value.  This result has been proved for
transmission over the BEC in \cite{ShriniWhy}, \cite{Pfister}, and \cite{KRU12}
and for transmission over general binary-input memoryless
output-symmetric (BMS) channels in \cite{Shrinivas} and
\cite{PfisterMacris}.

The tool we introduce probably has a considerably larger range
of applications for (coding) systems which
are governed by a variational principle. We use $(l,r)$-regular
Gallager ensembles as a proof of concept. We run the belief-propagation
algorithm on this ensemble and express it in the variational form
using the potential functional \cite{Pfister}, \cite{PfisterMacris}.
The potential functional essentially is an average of the Bethe free energy on the ensemble 
and the channel output. The density evolution (DE)
equations can be obtained by differentiating this potential. In particular, the minimizers of the potential 
are given by the ``erasure probablity profiles'' that are solutions of the DE equations. 

For the $(l,r)$-regular ensemble, we use displacement
convexity \cite{McCann} to prove that the potential describing the system is
convex with respect to an alternative structure of probability
measures. Roughly speaking these are the probability measures associated to erasure probability profiles viewed as cumulative distribution functions
(cdf's). 
We consider the static case, when the decoder phase
transition threshold is equal to the maximum a posteriori (MAP)
threshold. This displacement convexity property plays a fundamental role in characterizing the set of
minimizing profiles in a suitable space of {\it increasing profiles}. For our case we get strict 
displacement convexity once a global translational degree 
of freedom of the profiles is removed. These results allow to conclude that in a suitable space of 
increasing profiles the minimizer of the potential functional is unique up to translations, and so is the solution of the DE equation. 

We also look at the minimization problem in a more general space of profiles that are not necessarily increasing. We show that 
the potential functional satisfies rearrangement inequalities which allow to reduce the search for minimizers to a space of increasing profiles.
However it is not clear if and when the rearrangement inequalities are strict. As a result the analysis falls short of establishing that there cannot
exist non increasing profiles that are minimizers and solutions of the DE equation.

The paper is organized as follows: Section \ref{introSettingSection}
introduces the framework for our analysis and our main results. We
then give a quick introduction of the notion of displacement convexity
in Section \ref{displConvDefSection}. Finally, Section
\ref{existenceSection} presents a proof of existence of the profile
that minimizes the potential, and Section \ref{displConvSection}
proves that the functional is displacement convex. We discuss possible generalizations and open issues in the conclusion.
\begin{figure}
  \centering
  \begin{tikzpicture}
    \begin{axis}[xlabel = $p$, ylabel = $W_s(p)$, grid=both, xmin=-0.1, ymin=0, xmax=1]
      \addplot[no marks, blue, domain = 0:1, smooth]
      {1.0/6.0 + (1-x)*( 5.0/6.0 * (1-x)^(0.2) - 1 ) - 0.4881 / 3.0 * x^(3.0)};
       \addplot[no marks, blue, domain = -0.1:0, smooth, dashed]
            {1.0/6.0 + (1-x)*( 5.0/6.0 * (1-x)^(0.2) - 1 ) - 0.4881 / 3.0 * x^(3.0)};

      \node [above] at (axis cs: 0.9,0) {$P_{MAP}$};
    \end{axis}
  \end{tikzpicture} \label{singlePotPlotat4881} 
  \caption{The plot of the single system potential $W_s(p)$ as a
  function of the check-erasure probability $p$, for a $(3,6)$ uncoupled
  ensemble and $\epsilon=\epsilon_{\rm MAP}$. There are two minima
  at $p=0$ and $p=p_{\rm MAP}$.}
\end{figure}
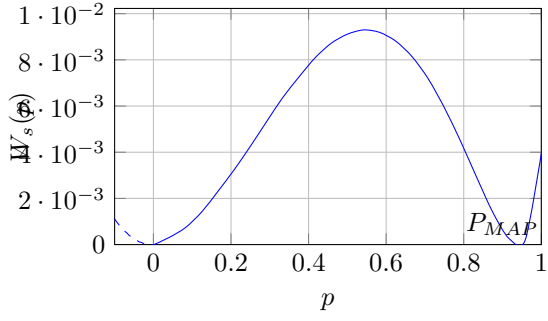

\section{Setting and Main Results} \label{introSettingSection}
In this section, we introduce the model and the associated variational
problem to which we apply the displacement convexity proof technique,
and we state our main result.

\subsection{$(l,r, L, w)$-Regular Ensembles on the BEC} \label{introBECSection}
Consider the spatially coupled $(l,r,L,w)$-regular ensemble, described
in detail in \cite{Shrinivas}, where the parameters represent
the left degree, right degree, system length, and coupling
window size (or smoothing parameter), respectively.  Specifically,
the ensemble is constructed as follows: consider $2L+1$ replicas
of a protograph of an $(l,r)$-regular ensemble. We couple these
components by connecting every variable node to $l$ check nodes,
and every check node to $r$ variable nodes.  The connections are
chosen randomly: for a variable node at position $z$, each of its
$l$ connections is chosen uniformly and independently in the range
$[z,\dots,z+w-1]$, and for a check node at position $z$, each of
its $r$ connections is chosen uniformly and independently in the
range $[z-w+1,\dots,z]$.

For the channel we take a BEC with parameter $\epsilon$.
Let $\check{x}_{z};\;z\in [-L,\dots,L]$ denote the erasure probability
of the variable node at position $z$. Consider the average over a window $x_z\equiv \frac{1}{w}\sum\limits_{i=0}^{w-1}\check{x}_{z-k}$. Then the fixed-point (FP) condition implied by density evolution (DE) is
\begin{equation*}
\displaystyle x_{z}=\frac{\epsilon}{w}\sum\limits_{k=0}^{w-1}\Big(1-\frac{1}{w}\sum\limits_{i=0}^{w-1}(1-x_{z-k+i})^{r-1}  \Big)^{l-1}.
\end{equation*}
This FP condition can be obtained by minimizing a ``potential functional", which is
\begin{equation}
\begin{split}
\frac{1}{w}\sum_{z=-L}^{L}\Bigg\{-x_{z}(1-x_{z})^{r-1}+\frac{1}{r}-\frac{1}{r}(1-x_{z})^r\\
-\frac{\epsilon}{l}\Big(\frac{1}{w}\sum_{u=0}^{w-1}(1-(1-x_{z+u})^{r-1})\Big)^{l}\Bigg\}.
\end{split} \label{discretePotential}
\end{equation}
At this point the normalization $1/w$ is a convenience whose reason will immediately appear.

The natural setting for displacement convexity is the continuum case.
We will therefore consider the continuum limit of
\eqref{discretePotential}. Extending our results to the discrete
setting is one among various open problems. We define the rescaled variables
$\tilde{z}=\frac{z}{w}$, $\tilde{u}=\frac{u}{w}$ and the rescaled function
$\tilde{x}(\frac{z}{w})\equiv x_{z}$.  It is easy to see that
\eqref{discretePotential} becomes a Riemann sum. When we take the
limit $L\to +\infty$ first and then $w\to +\infty$, we find
\begin{equation}
\begin{split}
\displaystyle\int\limits_{\mathds{R}}\,\mathrm{d}\tilde{z}\,&\Bigg\{ -\tilde{x}(\tilde{z})(1-\tilde{x}(\tilde{z}))^{r-1}-\frac{1}{r}(1-\tilde{x}(\tilde{z}))^r\\
&+\frac{1}{r}-\frac{\epsilon}{l} \Big(\int\limits_{0}^{1}\,\mathrm{d}\tilde{u}\,(1-(1-\tilde{x}(\tilde{z}+\tilde{u}))^{r-1})\Big)^{l}\Bigg\}.
\label{minFormulation1}
\end{split}
\end{equation}
At this point the reader might wonder if the integrals converge.
As explained in the introduction, we look in this paper at the decoder
phase transition threshold $\epsilon = \epsilon_{\text{\tiny MAP}}$. We give
at the end of this paragraph the conditions on the erasure probability
profile needed to have a well defined problem. From now on, the
reader should think of the noise level as fixed to the value
$\epsilon=\epsilon_{\text{\tiny MAP}}$, although we abuse notation
by simply writing $\epsilon$ in the formulas that follow.

It is more convenient to express \eqref{minFormulation1} with the function
$p(z)=1-(1-\tilde{x}(z))^{r-1}$. Note that this function is interpreted
as the erasure probability emitted by check nodes.  Summarizing,
the potential functional of interest is
\begin{equation}
\begin{split}
\displaystyle\mathcal{W}[p(\cdot)]=\int\limits_{\mathds{R}}\mathrm{d}z\Bigg\{\displaystyle\Big(&1-\frac{1}{r}\Big)\Big(1-p(z)\Big)^{\frac{r}{r-1}}-(1-p(z))\\
+&
\frac{1}{r}-\frac{\epsilon}{l}\Big(\int_{0}^{1}\mathrm{d}u\,p(z+u)\Big)^{l}\Bigg\}
\label{contPotential}.
\end{split}
\end{equation}
A word about the notation here: we use {\it square} brackets for functionals i.e., ``functions of functions'' and usual {\it round} brackets
for functions of a real variable. The continuum limit of the DE equation expressed in terms of $p(z)$ reads
\begin{align}\label{dep}
 1 - (1 -p(z))^{\frac{1}{r-1}} = \epsilon\int_{0}^1 dv \Big(\int_{0}^{1}\mathrm{d}u\,p(z+u-v)\Big)^{l-1}
\end{align}
One can check that \eqref{dep} gives the stationary points of \eqref{contPotential}.

Equation \eqref{contPotential} can be expressed as a sum of two contributions $\mathcal{W}_{\rm
single}[p(\cdot)]+\mathcal{W}_{\rm int}[p(\cdot)]$ which are defined as follows:
\begin{align}
\mathcal{W}_{\rm single}[p(\cdot)]=\displaystyle\int\limits_{\mathds{R}}&\mathrm{d}z\Bigg\{\displaystyle\Big(1-\frac{1}{r}\Big)(1-p(z))^{\frac{r}{r-1}}-(1-p(z)) \nonumber \\ &
+\frac{1}{r}-\frac{\epsilon}{l}p(z)^{l}\Bigg\}\equiv \int_{\mathds{R}}\mathrm{d}z\,W_{s}(p(z)),
\label{singleTermIntegral1}
\end{align}
\begin{equation}
\mathcal{W}_{\rm int}[p(\cdot)]=\int\limits_{\mathds{R}}
\mathrm{d}z\,\frac{\epsilon}{l}\Bigg\{p(z)^{l}-\Big(\int_{0}^{1}\mathrm{d}u\,p(z+u)\Big)^{l}\Bigg\}\label{interactionTerm1}.
\end{equation}
We call \eqref{singleTermIntegral1} the ``single system potential
functional" and \eqref{interactionTerm1} the ``interaction
functional".  The following remarks explain the interpretation
suggested by these names.  The term \eqref{interactionTerm1} vanishes
when evaluated for a constant $p(z) = p$. Moreover the integrand
of \eqref{singleTermIntegral1}, namely $W_s(p(z)) = W_s(p)$ is just
the potential of the underlying uncoupled code ensemble. This is
easily seen by recognizing that the usual DE equation
for the erasure probability of checks is recovered by setting the
derivative of $W_s(p)$ to zero. We will call $W_s(p)$ the ``single
system potential".  A plot of $W_{s}(p)$ for the $(3,6)$-ensemble
is shown in Figure \ref{singlePotPlotat4881} when
$\epsilon=\epsilon_{\text{\tiny MAP}}$. The figure shows that the single
potential vanishes at $p=0$ and  $p=p_{\text{\tiny MAP}}$, some
positive value. This is a generic feature of all $(l,r)$-regular
code ensembles as long as $l\geq 3$ (for cycle codes $l=2$, we have
$p_{\text{\tiny MAP}}=0$). This shows, in particular, that in order
for the integrals in \eqref{contPotential} to be well defined, we
have to consider profiles $p(z)$ that tend quickly enough to values, as $z\to \pm \infty$, such that the potential vanishes. Besides we 
will, for simplicity, restrict ourselves to continuous profiles. 

The above remarks motivate us to define the following spaces of profiles. Let 
\begin{align}
\mathcal{S}=\{ p(\cdot):\, & \mathds{R}\rightarrow \mathds{R}^+ \,{\rm continuous~ and~s.t}\, \nonumber \\ & \lim_{z
\rightarrow -\infty} z p(z)=0 , 
\lim_{z \rightarrow +\infty}z(p(z) - p_{\text{\tiny MAP}})=0\}.
\label{spaceS1}
\end{align}
Note that the left limit is $0$ and the right limit is $p_{\text{\tiny MAP}}$. We will also need a space of increasing profiles
(which can be flat over some intervals)
\begin{align*}
\mathcal{S}^\prime  =\{p(\cdot)\in \mathcal{S}:\; {\rm increasing}\}
\end{align*}
and a space of strictly increasing profiles,
\begin{align*}
\mathcal{S}^{\prime\prime}  =\{p(\cdot)\in \mathcal{S}:\; {\rm strictly~increasing}\}.
\end{align*}
As will become apparent it is useful to think of profiles in $\mathcal{S}^\prime$ and $\mathcal{S}^{\prime\prime}$ as cdf's of measures over $\mathds{R}$.
Here these measures are normalized so that the measure of $\mathds{R}$ is $p_{\rm \tiny MAP}$.
Note that for $p(\cdot)\in \mathcal{S}^\prime$ the support of the associated measure is not necessarily the whole real line, while it is the whole real 
line for $p(\cdot)\in \mathcal{S}^{\prime\prime}$. Finally we introduce the space $\mathcal{S}^{\prime\prime}_0$ of strictly 
increasing profiles that are pinned at the origin, more precisely $p(\cdot)\in \mathcal{S}^{\prime\prime}_0$ if 
and only if $p(\cdot)\in \mathcal{S}^{\prime\prime}$ and $p(0) = p_{\rm \tiny MAP}/2$.

\subsection{Main Results}\label{introSettingResultsSection}
It is easy to see that $\mathcal{W}[p(\cdot)]$ is bounded
from below, more precisely $\inf_{p(\cdot)\in
\mathcal{S}}\mathcal{W}[p(\cdot)]\geq -\frac{1}{2}p_{\text{\tiny MAP}}^l$. Indeed $W_s(p)\geq 0$ as
seen in Figure~\ref{singlePotPlotat4881} and using Jensen's inequality one can
show that $\mathcal{W}_{\rm int}[p(\cdot)]\geq -\frac{1}{2}p_{\text{\tiny MAP}}^l$ (see Lemma \ref{poslem}).
The first non-trivial question one may ask is whether the minimum is
attained in $\mathcal{S}$. Using a rearrangement inequality of Brascamp-Lieb-Luttinger \cite{Brascamp} this question can be reduced 
to the same one in $\mathcal{S}^\prime\subset\mathcal{S}$. Existence in $\mathcal{S}^\prime$ will be answered by means of
the so-called ``direct method'', a standard strategy of calculus of variations \cite{Dacorogna}, \cite{Fonseca},
(see Section \ref{dirmeth}). Once we know that a minimizer exists in $\mathcal{S}^\prime$ we prove that it cannot be in 
$\mathcal{S}^\prime\setminus \mathcal{S}^{\prime\prime}$. The following existence theorem is 
proven in Section \ref{dirmeth}.

\begin{theorem}\label{existencetheorem}
Let $\epsilon=\epsilon_{\text{\tiny MAP}}$. The functional $\mathcal{W}[p(\cdot)]$ achieves its 
minimum over $\mathcal{S}$ in the subspace $\mathcal{S}^{\prime\prime}$. 
There does not exist a minimum in 
$\mathcal{S}^\prime \setminus \mathcal{S}^{\prime\prime}$.
\end{theorem}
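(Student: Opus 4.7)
The plan follows the three-stage strategy suggested in the paragraph immediately preceding the statement.

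Stage one: reduce from $\mathcal{S}$ to $\mathcal{S}^\prime$ by rearrangement. The integrand of $\int_{\mathds{R}} W_s(p(z))\,dz$ is a pointwise function of $p$, so this contribution is invariant under any equimeasurable rearrangement. The remaining piece of $\mathcal{W}$ is, up to sign and a positive constant,
\begin{equation*}
\int dz\,\Big(\int_{0}^{1}du\,p(z+u)\Big)^{l}=\int du_1\cdots du_l\prod_{j=1}^{l}\mathbb{1}_{[0,1]}(u_j)\int dz\prod_{j=1}^{l}p(z+u_j),
\end{equation*}
a multilinear convolution integral of exactly the type treated by the Brascamp--Lieb--Luttinger inequality. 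After accounting for the nonzero $+\infty$ boundary value (for instance by working with $p_{\text{\tiny MAP}}-p$ on one half-line and the original $p$ on the other), BLL says this quantity is increased by passage to the monotone increasing rearrangement $\tilde p\in\mathcal{S}^\prime$, so $\mathcal{W}[\tilde p]\le\mathcal{W}[p]$.

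Stage two: existence in $\mathcal{S}^\prime$ by the direct method. I would first eliminate translation invariance by restricting to profiles pinned at $p(0)=p_{\text{\tiny MAP}}/2$ (every $\mathcal{S}^\prime$ profile can be translated to satisfy this). Take a minimizing sequence $\{p_n\}\subset\mathcal{S}^\prime$. Each $p_n$ is monotone with values in $[0,p_{\text{\tiny MAP}}]$, so Helly's selection theorem produces a subsequence converging pointwise a.e.\ to an increasing limit $p_\infty$ with the pinning preserved. The tail decay built into $\mathcal{S}$ combined with monotonicity upgrades pointwise convergence to uniform control at $\pm\infty$, yielding $p_\infty\in\mathcal{S}^\prime$. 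Lower semicontinuity then splits: $\int W_s(p(z))\,dz$ is handled by Fatou since $W_s\ge 0$ (see Figure~\ref{singlePotPlotat4881}), while $\mathcal{W}_{\rm int}$ is handled by dominated convergence on bounded regions (the integrand is uniformly bounded) together with the $\mathcal{S}$-tail control. Hence $\mathcal{W}[p_\infty]\le\liminf_n\mathcal{W}[p_n]$, so $p_\infty$ is a minimizer in $\mathcal{S}^\prime$.

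Stage three: exclude $\mathcal{S}^\prime\setminus\mathcal{S}^{\prime\prime}$. Suppose $p^\star\in\mathcal{S}^\prime\setminus\mathcal{S}^{\prime\prime}$ is a minimizer; then it is constant on some maximal interval $[a,b]$, $b>a$, at a value $\bar p\in(0,p_{\text{\tiny MAP}})$. At $\epsilon=\epsilon_{\text{\tiny MAP}}$, $W_s$ vanishes only at $0$ and $p_{\text{\tiny MAP}}$, so $W_s(\bar p)>0$. Perform the compression $p_\sigma(z)=p^\star(z)$ for $z\le a$ and $p_\sigma(z)=p^\star(z+\sigma)$ for $z\ge a$, with $\sigma\in(0,b-a)$. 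Then $p_\sigma\in\mathcal{S}^\prime$, the single contribution drops by exactly $\sigma W_s(\bar p)=\Theta(\sigma)$, and a direct estimate exploiting translation invariance of $\mathcal{W}_{\rm int}$ together with the fact that the splice between $p^\star$ and $p^\star(\cdot+\sigma)$ lives inside the plateau of constancy shows the interaction change to be of smaller order in $\sigma$. Thus $\mathcal{W}[p_\sigma]<\mathcal{W}[p^\star]$ for small $\sigma$, a contradiction.

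The main obstacle is the compactness-plus-semicontinuity package in stage two, because profiles in $\mathcal{S}$ are not integrable: only $p$ itself decays at $-\infty$ and only $p-p_{\text{\tiny MAP}}$ decays at $+\infty$. Making Helly-type pointwise limits survive inside the $\mathcal{W}$-integrals requires both pinning the translation and carefully extracting uniform tail estimates from the $\mathcal{S}$-conditions and the monotonicity. The rearrangement step in stage one is also non-routine, since Brascamp--Lieb--Luttinger is classically stated for functions vanishing at infinity and must be adapted to cdf-type profiles with distinct boundary values at $\pm\infty$.
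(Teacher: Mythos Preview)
Stages one and two of your plan match the paper's approach: the paper also reduces from $\mathcal{S}$ to $\mathcal{S}'$ via Brascamp--Lieb--Luttinger (handled there by first symmetrizing the profile about a large point $R$), pins the translation at $p(0)=p_{\text{\tiny MAP}}/2$, proves tightness of a minimizing sequence, and then runs the direct method with Prokhorov/Helly and Fatou.

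Stage three has a genuine gap. Your compression argument coincides with an auxiliary lemma the paper proves only for plateaus of length strictly greater than the averaging-window width $1$, and which the paper explicitly flags as ``not necessary for our results.'' When $b-a<1$, the window $[z,z+1]$ for $z$ near the splice $z=a$ reaches past $b$, where $p^\star(\cdot)$ and $p^\star(\cdot+\sigma)$ genuinely differ; a short computation then gives $\mathcal{W}_{\rm int}[p_\sigma]-\mathcal{W}_{\rm int}[p^\star]=\Theta(\sigma)$, not $o(\sigma)$. Worse, if $p^\star$ really is a minimizer then its first variation vanishes in every admissible direction, including the compression direction, so $\frac{d}{d\sigma}\mathcal{W}[p_\sigma]\big|_{\sigma=0}=0$: the $-\sigma W_s(\bar p)$ drop from the single term is \emph{exactly} cancelled at first order by the interaction term, and your contradiction dissolves. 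Your argument also omits the semi-infinite plateaus at the values $0$ and $p_{\text{\tiny MAP}}$, where $W_s(\bar p)=0$ and the compression degenerates to a pure translation. The paper avoids all of this by a different route: it first shows that any minimizer in $\mathcal{S}'$ must satisfy the DE equation \eqref{dep} (by taking the directional derivative along the negative DE residual), and then proves directly that an increasing solution of \eqref{dep} cannot be flat on any interval---finite or semi-infinite---by differentiating \eqref{dep} and analyzing the resulting identity on the plateau.
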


The existence of a minimum in $\mathcal{S}^\prime\setminus\mathcal{S}^{\prime\prime}$ is excluded, i.e., a minimizer 
that increases has to be strictly increasing.
However we are not able to exclude the existence of a minimizer in $\mathcal{S} \setminus \mathcal{S}^\prime$, in other
words a minimizer that would have
``oscillations''. In order to exclude such minimizers we would have 
to study under what conditions, in our context, the Brascamp-Lieb-Luttinger inequality is strict; but we do 
not address this issue in the present work. In general this 
can be a difficult problem, see \cite{burchard}, \cite{taryn-flock}.

When functionals are convex one obtains important information on the set of minimizers. For example strict convexity implies 
that the minimizer is unique. 
Thus the next natural question is whether or not the functional $\mathcal{W}[p(\cdot)]$ is (strictly) convex. 
This is in fact not true, but we will show that it is {\em displacement convex} 
 in $\mathcal{S}^\prime$ (hence also in $\mathcal{S}^{\prime\prime}$). Here displacement convexity refers
to convexity under an interpolation path that is different from the usual linear combination.
The displacement convexity in $\mathcal{S}^\prime$ (and $\mathcal{S}^{\prime\prime}$) cannot be strict since the system certainly 
has at least one translational degree of freedom: indeed the functional is invariant under a global translation, i.e., we have
$\mathcal{W}[p(\cdot +\tau)] = \mathcal{W}[p(\cdot)]$. To investigate if displacement 
convexity is strict it is convenient to remove this degree of freedom by pinning the profiles, say at the origin. 
In fact, we will prove strict displacement convexity of the functional only in the space $\mathcal{S}^{\prime\prime}_0$ of pinned and 
strictly increasing profiles. 

\begin{theorem}\label{displacementconvtheorem}
Let $\epsilon=\epsilon_{\text{\tiny MAP}}$. The functional $\mathcal{W}[p(\cdot)]$ is displacement 
convex on $\mathcal{S}^{\prime}$, and strictly displacement convex 
on $\mathcal{S}^{\prime\prime}_0$. 
\end{theorem}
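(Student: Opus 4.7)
The strategy is to reparametrize the problem so that McCann's displacement interpolation becomes an ordinary linear interpolation. Every $p(\cdot)\in\mathcal{S}^\prime$ is the cumulative distribution function of a positive measure $\mu$ on $\mathds{R}$ of total mass $p_{\text{\tiny MAP}}$; equivalently, it is encoded by its quantile function $q(s)=\inf\{z\in\mathds{R}:p(z)\ge s\}$ for $s\in[0,p_{\text{\tiny MAP}}]$. Given $p_0,p_1\in\mathcal{S}^\prime$, the displacement interpolant $p_t$ is specified by $q_t=(1-t)q_0+t q_1$, or equivalently by $\mu_t=(T_t)_\#\mu_0$ with $T_t=(1-t)\,\mathrm{id}+tT$, where $T$ is the monotone nondecreasing one-dimensional optimal transport map from $\mu_0$ to $\mu_1$. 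The plan is to show that $t\mapsto\mathcal{W}[p_t(\cdot)]$ is convex on $[0,1]$, and strictly convex when $p_0\neq p_1$ both lie in $\mathcal{S}^{\prime\prime}_0$, by treating $\mathcal{W}_{\rm single}$ and $\mathcal{W}_{\rm int}$ separately.

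For the single-system contribution, the change of variable $z=q(s)$ gives
\[
\mathcal{W}_{\rm single}[p(\cdot)]=\int_0^{p_{\text{\tiny MAP}}}W_s(s)\,dq(s),
\]
understood as a Stieltjes integral; the flat pieces of $p$ at levels $0$ and $p_{\text{\tiny MAP}}$ contribute nothing because $W_s(0)=W_s(p_{\text{\tiny MAP}})=0$ at $\epsilon=\epsilon_{\text{\tiny MAP}}$. Since $dq_t=(1-t)\,dq_0+t\,dq_1$, this functional is \emph{affine} in $t$, and hence trivially displacement convex.

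The interaction term is the heart of the argument. Using $p(z)^l=\mu^{\otimes l}(\{\max_i y_i\le z\})$ together with the analogous representation of $\bigl(\int_0^1 p(z+u)\,du\bigr)^l$ as an integral against $du\otimes d\mu^{\otimes l}$, Fubini and the elementary identity $\int_{\mathds{R}}dz\,[\mathds{1}\{A\le z\}-\mathds{1}\{B\le z\}]=B-A$ for $B\le A$ yield
\[
\mathcal{W}_{\rm int}[p(\cdot)]=\frac{\epsilon}{l}\int_{[0,1]^l}du\int d\mu^{\otimes l}(y)\bigl[\max_i(y_i-u_i)-\max_i y_i\bigr].
\]
Applying this to $\mu_t=(T_t)_\#\mu_0$ and pulling back componentwise gives
\[
\mathcal{W}_{\rm int}[p_t(\cdot)]=\frac{\epsilon}{l}\int du\int d\mu_0^{\otimes l}(y)\bigl[\max_i(T_t(y_i)-u_i)-\max_i T_t(y_i)\bigr].
\]
The crucial observation is that $T_t=(1-t)\,\mathrm{id}+tT$ is monotone nondecreasing on $\mathds{R}$ for every $t\in[0,1]$, since both $T$ and the identity are. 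Therefore the ordering of $T_t(y_1),\dots,T_t(y_l)$ coincides with that of $y_1,\dots,y_l$ uniformly in $t$. After reordering so that $y_1\le\cdots\le y_l$, the term $\max_i T_t(y_i)=T_t(y_l)=(1-t)y_l+tT(y_l)$ is \emph{affine} in $t$, while $\max_i(T_t(y_i)-u_i)$, being a maximum of $l$ affine functions of $t$, is \emph{convex} in $t$. The bracket is thus convex minus affine, hence convex in $t$, and convexity is preserved under integration against the positive measure $du\,d\mu_0^{\otimes l}(y)$. Combined with the displacement-affine $\mathcal{W}_{\rm single}$ this establishes displacement convexity on $\mathcal{S}^\prime$.

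For strict displacement convexity on $\mathcal{S}^{\prime\prime}_0$, observe that $\max_i(T_t(y_i)-u_i)$ is strictly convex at some $t^\star\in(0,1)$ precisely when two indices $i\neq j$ simultaneously realize the maximum there with distinct slopes $T(y_i)-y_i\neq T(y_j)-y_j$. If $p_0\neq p_1$ both lie in $\mathcal{S}^{\prime\prime}_0$, the map $y\mapsto T(y)-y$ cannot be constant: a constant value would force $\mu_1$ to be a translate of $\mu_0$, but the pinning $p_0(0)=p_1(0)=p_{\text{\tiny MAP}}/2$ rules out any nonzero translation, contradicting $p_0\neq p_1$. A continuity argument then produces a subset of $[0,1]^l\times\mathds{R}^l$ of positive $du\,d\mu_0^{\otimes l}$-measure on which the crossover condition holds at some $t^\star\in(0,1)$, which yields strict convexity of the integral. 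I expect the main obstacle to be exactly this last quantitative step: strict convexity of the integrand occurs only in a narrow ``crossover window'' of $t$-values whose location depends on $(u,y)$, so some care is required to glue these windows together into strict convexity of the $t$-integral and to handle the possibly singular structure of the general one-dimensional transport map $T$.
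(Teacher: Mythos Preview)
Your displacement-convexity argument on $\mathcal{S}'$ is essentially the paper's: the single term is displacement-affine via the quantile change of variables (Proposition~\ref{displacementconvsingle}), and for the interaction term you write the same kernel representation and observe that monotonicity of $T_t$ preserves the ordering so that the kernel becomes a max of affine functions of $t$---your $\max_i(y_i-u_i)-\max_i y_i$ is precisely the paper's $-\min_i(d_{1i}+u_i)$ in the ordered sector (Proposition~\ref{displConvIntTheorem}, Lemma~\ref{conclem}). The two routes diverge only on \emph{strict} convexity in $\mathcal{S}''_0$. You argue via a pointwise ``crossover'' criterion (the argmax must switch at some $t^\star\in(0,1)$) and use the pinning to exclude $T-\mathrm{id}$ constant; the paper instead integrates out $u$ to obtain a closed-form kernel $V(\mathbf{d})$, computes its Hessian at $\mathbf{d}=0$ explicitly (Appendix~\ref{explicitkernel}), and finds eigenvalues $1$ and $1+\tfrac{1}{l-1}$, so that $V$ is strictly convex in a neighborhood of the origin, which carries positive $\mu_0^{\otimes l}$-measure by full support of $\mu_0$. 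Your approach is more conceptual and isolates the role of the pinning cleanly; the paper's explicit Hessian computation sidesteps the ``gluing of crossover windows'' you flag, though it too implicitly requires the affine $\mathbf{d}$-path to be non-constant on a positive-measure set---which is exactly the fact about $T-\mathrm{id}$ that you make explicit.
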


This implies that there is a unique minimizer in $\mathcal{S}^{\prime\prime}_0$. But since the 
existence of a minimizer is excluded in $\mathcal{S}^\prime \setminus \mathcal{S}^{\prime\prime}$ (by 
theorem \ref{existencetheorem}), we can conclude that 
the only minimizers in $\mathcal{S}^\prime$ are translates of the unique one in $\mathcal{S}^{\prime\prime}_0$. These 
consequences also translate into properties of solutions of the DE equation \eqref{dep}. 

\begin{corollary}\label{basiccoro}
Let $\epsilon=\epsilon_{\text{\tiny MAP}}$. In the space $\mathcal{S}^{\prime\prime}_0$ the functional $\mathcal{W}[p(\cdot)]$ 
has a unique minimizer. In $\mathcal{S}^\prime$ all minimizers are translates of it. Similarly, in  
$\mathcal{S}^{\prime\prime}_0$ the DE equation \eqref{dep}  
has a unique solution, and in $\mathcal{S}^\prime$ all solutions are translates of it. 
\end{corollary}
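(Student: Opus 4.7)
The plan is to combine Theorems~\ref{existencetheorem} and \ref{displacementconvtheorem} with the translation invariance $\mathcal{W}[p(\cdot+\tau)]=\mathcal{W}[p(\cdot)]$, and then to transfer the resulting statements about minimizers to statements about solutions of \eqref{dep} via the general principle that critical points of a (displacement) convex functional are minimizers.

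First I would check that every $p(\cdot)\in\mathcal{S}^{\prime\prime}$ admits a unique translate lying in $\mathcal{S}^{\prime\prime}_0$: continuity, strict monotonicity, and the tail conditions built into $\mathcal{S}$ force $\lim_{z\to-\infty}p(z)=0$ and $\lim_{z\to+\infty}p(z)=p_{\text{\tiny MAP}}$, so the intermediate value theorem yields a unique $z_0\in\mathds{R}$ with $p(z_0)=p_{\text{\tiny MAP}}/2$, whence $p(\cdot+z_0)\in\mathcal{S}^{\prime\prime}_0$. Together with translation invariance this gives $\inf_{\mathcal{S}^{\prime\prime}}\mathcal{W}=\inf_{\mathcal{S}^{\prime\prime}_0}\mathcal{W}$, so Theorem~\ref{existencetheorem} produces a minimizer in $\mathcal{S}^{\prime\prime}_0$. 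For uniqueness, if $p_1,p_2\in\mathcal{S}^{\prime\prime}_0$ were two distinct minimizers, their displacement interpolation $\phi_t$ would remain in $\mathcal{S}^{\prime\prime}_0$---the pinning $\phi_t(0)=p_{\text{\tiny MAP}}/2$ is preserved because, writing $F_i$ for the cdf associated with $p_i$, one has $F_t^{-1}(p_{\text{\tiny MAP}}/2)=(1-t)F_1^{-1}(p_{\text{\tiny MAP}}/2)+tF_2^{-1}(p_{\text{\tiny MAP}}/2)=0$---and strict displacement convexity (Theorem~\ref{displacementconvtheorem}) would then yield $\mathcal{W}[\phi_t]<\min\mathcal{W}$ on $(0,1)$, a contradiction. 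For the $\mathcal{S}^\prime$ statement about minimizers, Theorem~\ref{existencetheorem} excludes minimizers in $\mathcal{S}^\prime\setminus\mathcal{S}^{\prime\prime}$, so every $\mathcal{S}^\prime$-minimizer lies in $\mathcal{S}^{\prime\prime}$ and is therefore a translate of the unique minimizer in $\mathcal{S}^{\prime\prime}_0$.

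For the DE equation, every minimizer of $\mathcal{W}$ is stationary and hence solves \eqref{dep}, as noted in the text after \eqref{dep}. For the converse I would show that every solution $p^*\in\mathcal{S}^\prime$ of \eqref{dep} is itself a minimizer. Given such a $p^*$ and any $q\in\mathcal{S}^\prime$, consider the displacement interpolation $\phi_t$ from $p^*$ to $q$; by Theorem~\ref{displacementconvtheorem}, $t\mapsto\mathcal{W}[\phi_t]$ is convex on $[0,1]$, and a chain-rule computation identifies its right-derivative at $t=0$ with an integral of the usual first variation $\delta\mathcal{W}/\delta p$ evaluated at $p^*$ against the initial displacement velocity. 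Since \eqref{dep} is precisely the condition $\delta\mathcal{W}/\delta p[p^*]\equiv 0$, this derivative vanishes and convexity then forces $\mathcal{W}[q]\geq\mathcal{W}[p^*]$. Thus $p^*$ is a minimizer, and the uniqueness statements for DE solutions reduce to those already proven for minimizers.

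The main technical obstacle is making the chain-rule identification rigorous: justifying that the right-derivative of $t\mapsto\mathcal{W}[\phi_t]$ at $t=0$ really equals an integral of $\delta\mathcal{W}/\delta p[p^*]$ against the initial velocity requires enough regularity of the displacement flow to differentiate under the integrals in \eqref{contPotential}, together with a check that displacement interpolations of profiles in $\mathcal{S}^\prime$ preserve continuity and the tail decay built into $\mathcal{S}$. Everything else is essentially bookkeeping around the two theorems and the translation invariance of $\mathcal{W}$.
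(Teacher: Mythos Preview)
Your proposal is correct and follows essentially the same route as the paper: uniqueness in $\mathcal{S}^{\prime\prime}_0$ via strict displacement convexity and a contradiction argument, reduction of $\mathcal{S}^\prime$-minimizers to translates via Theorem~\ref{existencetheorem}, and the DE statement by showing that any DE solution $p^*\in\mathcal{S}^\prime$ has vanishing derivative of $\lambda\mapsto\mathcal{W}[p_\lambda]$ at $\lambda=0$, whence displacement convexity forces it to be a global minimizer. The paper's proof is terser but identical in structure; your explicit check that the pinning $p(0)=p_{\text{\tiny MAP}}/2$ is preserved under displacement interpolation, and your flagging of the regularity needed to justify differentiating under the integral, are details the paper leaves implicit.
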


The proofs of Theorem \ref{displacementconvtheorem} and Corollary \ref{basiccoro} are given in Section \ref{last}.

We would like to point out that while displacement convexity itself is quite general and can presumably be generalized to 
the general potential functionals of \cite{Pfister}, the issue of strict displacement convexity is 
more subtle. In fact T. Richardson \cite{TomRich} pointed out 
 examples of system where ``internal'' translation degrees of freedom  may exist (besides the global one) which 
 would spoil the unicity up to global translations.

%


\section{Displacement Convexity} \label{displConvDefSection}
Displacement convexity can be very useful in functional
analysis. It goes back to McCann [7] and plays an important role in the theory of optimal transport \cite{Villani}. It has been used
in \cite{Alberti} and \cite{Carlen} to study a functional governing a spatially coupled Curie-Weiss model, which bears close similarities
with the coding theory model studied here (see \cite{Hassani}). In this section, we give a quick introduction to the tool of
displacement convexity.

Recall first that the usual notion of convexity of a generic
functional $\mathcal{F}[p(\cdot)]$ on a generic  space $\mathcal{X}$
means that for all $p_0(\cdot),p_1(\cdot)\in\mathcal{X}$ and $\lambda\in[0,1]$,
\begin{equation*}
\mathcal{F}[(1-\lambda) p_0(\cdot) +\lambda p_1(\cdot)]\leq\lambda\mathcal{F}[p_0(\cdot)]+(1-\lambda)\mathcal{F}[p_1(\cdot)].
\end{equation*}

Lemma \ref{rearrangementlemma} in Section \ref{existenceSection}
shows that we can restrict the minimization problem to the space
of increasing profiles. Thus, the discussion below assumes that we
consider only such profiles. This is the correct setting for defining
displacement convexity.

An increasing profile with left limit $0$ and right limit $p_{\text{\tiny
MAP}}$ can be thought of as a cdf (up to scaling because the right limit is not $1$). Further, such
increasing functions have increasing inverse functions (which can
also be thought of as cdfs, up to scaling).  More precisely, consider
the following bijective maps that associate (with an abuse of
notation) to a cdf $p(\cdot)$  its inverse $z(\cdot)$:
\begin{equation*}
\begin{split}
z(p)&=\inf\{\,z:\;p(z)>p\},\\
p(z)&=\inf\{\,p:\;z(p)>z\}. \label{displConvMaps1}
\end{split}
\end{equation*}
For any two increasing profiles $p_0(\cdot),p_1(\cdot)\in\mathcal{S}^\prime$,
we consider $z_0(\cdot),z_1(\cdot)$ their respective inverses
under the maps defined above. Then for any $\lambda \in [0,1]$, the
interpolated profile $p_{\lambda}(\cdot)$ is defined as follows:
\begin{equation*}
\begin{split}
z_{\lambda}(p)&=(1-\lambda) z_0(p)+\lambda z_1(p),\\
p_{\lambda}(z)&=\inf\{\,p:\;z_{\lambda}>z\}. \label{mapInvProf1}
\end{split}
\end{equation*}
In words, the difference in interpolation under the alternative
structure is that the linear interpolation is applied on the inverse
of the profiles of interest, and the effect of such an interpolation
is then mapped back into the space of profiles. 
It is not difficult to see that if $p_0(\cdot)$ and $p_1(\cdot)$ are in $\mathcal{S}^\prime$, 
$\mathcal{S}^{\prime\prime}$ or $\mathcal{S}^{\prime\prime}_0$, then so are the interpolating 
profiles $p_\lambda(\cdot)$ for all $\lambda\in [0,1]$.

Displacement convexity
of $\mathcal{W}[p(\cdot)]$ on the space $\mathcal{S}^\prime$ simply means
that the following inequality holds:
\begin{equation}\label{displacementconvinequ}
\mathcal{W}[p_{\lambda}(\cdot)]\leq (1-\lambda) \mathcal{W}[p(\cdot)]+\lambda\mathcal{W}[p^\prime(\cdot)]
\end{equation}
for any $p(\cdot),p^\prime(\cdot)\in\mathcal{S}^\prime$ and $\lambda\in[0,1]$. Strict displacement
convexity means that this inequality is strict as long as $p_0(\cdot)$
and $p_1(\cdot)$ are distinct and $\lambda\in ]0,1[$.  We will prove 
displacement convexity of $\mathcal{W}[p(\cdot)]$ by separately
proving this property, in Sections \ref{displConvSingleSection} and
\ref{displConvIntSection}, for the two functionals
(\ref{singleTermIntegral1}) and (\ref{interactionTerm1}), respectively. Moreover we will see that \eqref{interactionTerm1}
is strictly displacement convex in $\mathcal{S}^{\prime\prime}_0$.

\section{Existence of Minimizing Profile} \label{existenceSection}
In this section, we prove that the functional $\mathcal{W}$ attains
its minimum.

\subsection{Preliminaries}\label{prel}

We start by some preliminaries to show that one can
restrict the search of minimizing profiles to those in $\mathcal{S}$ that are monotone increasing.
The proofs of the lemmas can be found in
Appendices \ref{positivitySection}-\ref{tightnessSection}.

The first
lemma states that the interaction potential is bounded from below.
\begin{lemma}\label{poslem}
For any $p(\cdot)$ in $\mathcal{S}$,
{\begin{equation*}
\int\limits_{\mathds{R}}\,\mathrm{d}z\; \Bigg\{ p(z)^{l}-\Bigg( \int\limits_{0}^{1}\,\mathrm{d}u\;p(z+u) \Bigg)^{l} \Bigg\}\geq -\frac{1}{2}p_{\text{\tiny MAP}}^{l}.
\label{positivityLemma}
\end{equation*}}
\end{lemma}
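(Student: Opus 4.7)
The plan is to obtain the lower bound via Jensen's inequality combined with a careful bookkeeping of boundary contributions at $\pm\infty$. The essential observation is that although both $\int p(z)^l\,dz$ and $\int q(z)^l\,dz$, with $q(z) = \int_0^1 p(z+u)\,du$, are individually infinite (because $p(z)\to p_{\text{\tiny MAP}}$ as $z\to+\infty$), their difference is an improper integral whose divergences cancel, leaving only a finite, explicit contribution from the asymptotic behaviour.

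First, since $l\geq 2$ and $p(\cdot)\geq 0$, the function $x\mapsto x^l$ is convex, so Jensen's inequality applied to the uniform measure on $[0,1]$ gives
\begin{equation*}
q(z)^l \;=\; \Bigl(\int_0^1 p(z+u)\,du\Bigr)^l \;\leq\; \int_0^1 p(z+u)^l\,du .
\end{equation*}
Consequently the integrand satisfies
\begin{equation*}
p(z)^l - q(z)^l \;\geq\; \int_0^1 \bigl[p(z)^l - p(z+u)^l\bigr]\,du .
\end{equation*}

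Next, integrate over the symmetric interval $[-A,A]$ and exchange the order of integration (Fubini applies since $p$ is bounded and continuous on $\mathds{R}$). Translating $z\mapsto z-u$ yields
\begin{equation*}
\int_{-A}^A \!\int_0^1 \bigl[p(z)^l - p(z+u)^l\bigr]\,du\,dz
= \int_0^1 \!\Bigl[\int_{-A}^{-A+u} p(z)^l\,dz - \int_{A}^{A+u} p(z)^l\,dz\Bigr]\,du .
\end{equation*}
The decay conditions defining $\mathcal{S}$ imply $p(z)\to 0$ as $z\to -\infty$ and $p(z)\to p_{\text{\tiny MAP}}$ as $z\to +\infty$. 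Hence, as $A\to+\infty$, the first inner integral tends to $0$ while the second tends to $u\,p_{\text{\tiny MAP}}^l$ for each $u\in[0,1]$. By dominated convergence in $u$, the whole expression converges to $-\int_0^1 u\,p_{\text{\tiny MAP}}^l\,du = -\tfrac12 p_{\text{\tiny MAP}}^l$, which is precisely the claimed bound.

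The main obstacle is not the Jensen step but verifying that the improper integral $\int_{\mathds{R}}\{p(z)^l-q(z)^l\}\,dz$ is actually well-defined as the symmetric limit $\lim_{A\to\infty}\int_{-A}^A$, and that the decay rates $zp(z)\to 0$ and $z(p(z)-p_{\text{\tiny MAP}})\to 0$ imposed by $\mathcal{S}$ are sufficient to justify passing to the limit at both ends. I would handle this by noting that $p(z)^l-q(z)^l$ can be written as an average of differences $p(z)^l-p(z+u)^l$, for which the telescoping structure above makes the tail behaviour explicit; the hypotheses on $\mathcal{S}$ then guarantee both that the $A\to\infty$ limit exists and that the splitting into the Jensen term and the boundary term is legitimate.
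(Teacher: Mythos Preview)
Your proposal is correct and follows essentially the same approach as the paper: Jensen's inequality for the convex map $x\mapsto x^l$, followed by a change of variables that converts the integrated difference $\int_{-A}^A\int_0^1[p(z)^l-p(z+u)^l]\,du\,dz$ into two boundary contributions near $\pm A$, whose limits as $A\to\infty$ are $0$ and $\tfrac12 p_{\text{\tiny MAP}}^l$ respectively. The paper organizes the computation slightly differently (it first rewrites $\int_{-M}^M\!\int_0^1 p(z+u)^l\,du\,dz$ via Fubini and the substitution $z'=z+u$, then combines with Jensen), but the ingredients and the logic are identical.
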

\begin{proof}
See Appendix \ref{positivitySection}.
\end{proof}

We remark that any constant lower bound is sufficient for our purposes: finding 
profiles that minimize a potential is equivalent to find those that minimize a 
potential added to a constant.
The following
lemma states that a truncation of the profile at the value $p_{\text{\tiny MAP}}$
decreases the potential functional, so we may restrict our search of minimizing
profiles to those with range $p(z) \in [0,p_{\text{\tiny MAP}}]$.
\begin{lemma}\label{truncationLemma}
Define $\bar{p}(z)= \min\{p(z), p_{\text{\tiny MAP}}\}$.
For all $p(\cdot)\in\mathcal{S}$ we have
\begin{equation*}
\mathcal{W}[p(\cdot)] \geq \mathcal{W}[\bar{p}(\cdot)],
\end{equation*}
and the inequality is strict if $p(\cdot)\neq\bar{p}(\cdot)$.
\end{lemma}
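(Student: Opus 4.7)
The plan is a monotone interpolation argument. For $\tau \geq 0$, set $p_\tau(z) := \min\{p(z), p_{\text{\tiny MAP}}+\tau\}$, so that $p_0 = \bar p$ and (assuming the natural constraint $p(z) \leq 1$) $p_\tau = p$ for every $\tau \geq 1 - p_{\text{\tiny MAP}}$. Then $\mathcal{W}[p] - \mathcal{W}[\bar p] = \int_0^{1-p_{\text{\tiny MAP}}} \frac{d}{d\tau}\mathcal{W}[p_\tau]\, d\tau$, and the lemma reduces to showing this integrand is $\geq 0$ almost everywhere (and strictly positive on a set of positive $\tau$-measure when $p \neq \bar p$).

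First I would compute the functional derivative of $\mathcal{W}$. Setting $q_\tau(z) = \int_0^1 p_\tau(z+u)\,du$, differentiating the integrand of \eqref{contPotential} gives
\begin{equation*}
\frac{\delta \mathcal{W}}{\delta p(z_0)}\bigl[p_\tau\bigr]
= 1 - (1-p_\tau(z_0))^{\frac{1}{r-1}} - \epsilon \int_{z_0-1}^{z_0} q_\tau(z')^{l-1}\, dz'.
\end{equation*}
Since $\partial_\tau p_\tau(z) = \mathds{1}\{p(z)>p_{\text{\tiny MAP}}+\tau\}$ almost everywhere and $p_\tau(z_0)=p_{\text{\tiny MAP}}+\tau$ on this set, the chain rule yields
\begin{equation*}
\frac{d}{d\tau}\mathcal{W}[p_\tau]
= \int_{\{p>p_{\text{\tiny MAP}}+\tau\}}\!\!\Bigl[ 1 - (1-p_{\text{\tiny MAP}}-\tau)^{\frac{1}{r-1}} - \epsilon \int_{z_0-1}^{z_0} q_\tau(z')^{l-1}\, dz'\Bigr] dz_0.
\end{equation*}

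The key observation is the trivial uniform bound $q_\tau(z') \leq p_{\text{\tiny MAP}}+\tau$, immediate from $p_\tau \leq p_{\text{\tiny MAP}}+\tau$. This gives $\int_{z_0-1}^{z_0} q_\tau(z')^{l-1}\,dz' \leq (p_{\text{\tiny MAP}}+\tau)^{l-1}$, whence
\begin{equation*}
\frac{d}{d\tau}\mathcal{W}[p_\tau] \geq W_s'(p_{\text{\tiny MAP}}+\tau)\cdot\bigl|\{z : p(z)>p_{\text{\tiny MAP}}+\tau\}\bigr|,
\end{equation*}
using that $W_s'(s) = 1 - (1-s)^{1/(r-1)} - \epsilon s^{l-1}$. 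It then remains to establish $W_s' \geq 0$ on $[p_{\text{\tiny MAP}},1]$ at $\epsilon=\epsilon_{\text{\tiny MAP}}$: the equation $W_s'(s)=0$ is precisely the DE fixed-point equation, $p_{\text{\tiny MAP}}$ is by the Maxwell construction its largest root in $(0,1)$, and $W_s'(1)=1-\epsilon>0$; continuity then forces $W_s'>0$ on $(p_{\text{\tiny MAP}},1]$. Integrating the displayed bound from $\tau=0$ to $\tau=1-p_{\text{\tiny MAP}}$ delivers $\mathcal{W}[p] \geq \mathcal{W}[\bar p]$.

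For strict inequality when $p \neq \bar p$: the set $\{p>p_{\text{\tiny MAP}}\}$ has positive Lebesgue measure, so $|\{p>p_{\text{\tiny MAP}}+\tau\}|>0$ on some nonempty interval $[0,\tau^{\ast})$; combined with $W_s'(p_{\text{\tiny MAP}}+\tau)>0$ for $\tau\in(0,\tau^{\ast})$, the integrand above is strictly positive on a set of positive $\tau$-measure. The main conceptual obstacle is really the auxiliary one-dimensional positivity claim $W_s' \geq 0$ on $[p_{\text{\tiny MAP}},1]$: while visually clear from Figure~\ref{singlePotPlotat4881}, it deserves a short separate argument ruling out further fixed points of DE in $(p_{\text{\tiny MAP}}, 1)$. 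A secondary concern is the rigorous interchange of $\frac{d}{d\tau}$ with the spatial integral, handled by the Lipschitz continuity of $\tau\mapsto p_\tau$ and dominated convergence.
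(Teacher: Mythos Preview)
Your interpolation argument is correct and genuinely different from the paper's proof. The paper splits $\mathcal{W}=\mathcal{W}_{\rm single}+\mathcal{W}_{\rm int}$ and treats the two pieces separately: for $\mathcal{W}_{\rm single}$ it simply uses the pointwise inequality $W_s(p)\geq 0=W_s(p_{\text{\tiny MAP}})$ (the defining property at $\epsilon_{\text{\tiny MAP}}$); for $\mathcal{W}_{\rm int}$ it writes $p=\bar p+g$, expands $(\bar p+g)^l$ binomially, and bounds every cross term using $\bar p\leq p_{\text{\tiny MAP}}$ and Jensen. Your route instead keeps $\mathcal{W}$ in one piece, differentiates along the family $p_\tau=\min\{p,\,p_{\text{\tiny MAP}}+\tau\}$, and exploits the very clean bound $q_\tau\leq p_{\text{\tiny MAP}}+\tau$ to reduce everything to the one-variable inequality $W_s'\geq 0$ on $[p_{\text{\tiny MAP}},1]$. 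What your approach buys is a unified treatment and a transparent lower bound for the energy gap, namely $\int_0^{1-p_{\text{\tiny MAP}}} W_s'(p_{\text{\tiny MAP}}+\tau)\,|\{p>p_{\text{\tiny MAP}}+\tau\}|\,d\tau$; what the paper's approach buys is that it requires only $W_s\geq 0$, which is literally the Maxwell condition, rather than the slightly stronger monotonicity $W_s'\geq 0$ past $p_{\text{\tiny MAP}}$.

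Both technical caveats you flag are real but routine. The positivity $W_s'>0$ on $(p_{\text{\tiny MAP}},1]$ follows because for $(l,r)$-regular ensembles with $l\geq 3$ the curve $p\mapsto\bigl(1-(1-p)^{1/(r-1)}\bigr)/p^{l-1}$ has a single interior minimum, so the DE fixed-point equation $W_s'(p)=0$ has exactly two roots in $(0,1)$, and the larger one is precisely the local minimum $p_{\text{\tiny MAP}}$; combined with $W_s'(1)=1-\epsilon>0$ this gives what you need. The interchange of $\tfrac{d}{d\tau}$ with the spatial integral is justified since for each $\tau>0$ the set $\{p>p_{\text{\tiny MAP}}+\tau\}$ is bounded (by continuity and the limits in $\mathcal{S}$) and the integrand is uniformly bounded there; your Lipschitz/dominated-convergence sketch is adequate.
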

\begin{proof}
See Appendix \ref{truncationSection}.
\end{proof}

We next restrict our search of minimizing profiles to increasing
ones. In order to achieve this we will use rearrangement inequalities.
Here we will need a notion of increasing rearrangement, see \cite{Alber}. In
words, an increasing rearrangement associates to any function $p(\cdot)\in \mathcal{S}$ with range $[0, p_{\text{\tiny MAP}}]$ an increasing function
$p^*(\cdot)\in \mathcal{S}^\prime$ so that the total mass is preserved. More formally,
any non-negative function in $\mathcal{S}$ can be represented in layer cake form
\begin{equation*}
 p(z) = \int_0^{+\infty} dt\, \mathbb{1}_{E_t}(z),
\end{equation*}
where $\mathbb{1}_{E_t}(z)$ is the indicator function of the level
set $E_t=\{z\vert p(z) > t\}$.  For each $t$, the level set $E_t$
can be written as the union of a bounded set $A_t$ and a half line
$]a_t, +\infty[$. We define the rearranged set $E_t^* = ] a_t- \vert
A_t\vert, +\infty[$. The increasing rearrangement of $p(\cdot)$ is
the new function $p^*(\cdot)$ whose level sets are $E_t^*$. More
explicitly,
\begin{equation*}
p^*(z) = \int_0^{+\infty} dt\, \mathbb{1}_{E_t^*}(z).
\end{equation*}

\begin{lemma}\label{rearrangementlemma}
Take any $p(\cdot)\in \mathcal{S}$ and let $p^*(\cdot)\in \mathcal{S}^\prime$ be its increasing rearrangement. Then,
\begin{equation*}\label{incrRearrangementLemma}
\mathcal{W}[p(\cdot)] \geq \mathcal{W}[p^*(\cdot)].
\end{equation*}
\end{lemma}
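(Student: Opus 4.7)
Plan: I would decompose $\mathcal{W}[p]=\mathcal{W}_{\rm single}[p]+\mathcal{W}_{\rm int}[p]$ and handle each piece separately. The single-system piece turns out to be rearrangement \emph{invariant}, so the entire inequality will come from the interaction piece.

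For $\mathcal{W}_{\rm single}$, the integrand $W_s(p(z))$ depends on the profile only through its pointwise value. By Lemma~\ref{truncationLemma} I may assume $p(z)\in[0,p_{\rm MAP}]$. Since $W_s\geq 0$ on this interval with $W_s(0)=W_s(p_{\rm MAP})=0$ at $\epsilon=\epsilon_{\rm MAP}$ (see Figure~\ref{singlePotPlotat4881}), the layer-cake identity
\begin{equation*}
\int_{\mathds{R}} W_s(p(z))\,dz=\int_0^{\infty}\bigl|\{z:W_s(p(z))>s\}\bigr|\,ds
\end{equation*}
reduces the claim to equimeasurability: for every Borel $E\subset(0,p_{\rm MAP})$, the set $\{z:p(z)\in E\}$ has finite measure (the limits $p\to 0$ at $-\infty$ and $p\to p_{\rm MAP}$ at $+\infty$ push $p(z)$ outside $E$ for $|z|$ large) equal to that of $\{z:p^*(z)\in E\}$, by the very construction of $p^*$ via $E_t^*=(a_t-|A_t|,+\infty)$. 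Hence $\mathcal{W}_{\rm single}[p]=\mathcal{W}_{\rm single}[p^*]$.

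For $\mathcal{W}_{\rm int}$, expanding the $l$-th power of the window average and applying Fubini gives
\begin{equation*}
\mathcal{W}_{\rm int}[p]=\frac{\epsilon}{l}\int_{[0,1]^l}du_1\cdots du_l\int_{\mathds{R}}\bigl[p(z)^l-\prod_{i=1}^l p(z+u_i)\bigr]dz,
\end{equation*}
with the inner integral absolutely convergent by the $\mathcal{S}$-decay conditions. It therefore suffices to prove, for each fixed $(u_1,\ldots,u_l)\in[0,1]^l$,
\begin{equation*}
\int_{\mathds{R}}\bigl[p(z)^l-\prod_{i=1}^l p(z+u_i)\bigr]dz \geq \int_{\mathds{R}}\bigl[p^*(z)^l-\prod_{i=1}^l p^*(z+u_i)\bigr]dz,
\end{equation*}
which, after cancelling the common rearrangement-invariant (formally divergent) piece $\int p^l$, reduces to the pointwise-in-$u$ bound $\int\prod_i p(z+u_i)\,dz\leq\int\prod_i p^*(z+u_i)\,dz$. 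This is precisely the Brascamp-Lieb-Luttinger rearrangement inequality \cite{Brascamp} for $l$ translates of a single profile on the real line.

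The main obstacle is that $p$ does not vanish at $+\infty$ (it tends to $p_{\rm MAP}>0$), so the classical BLL inequality does not apply verbatim. I would handle this by writing $p=p_{\rm MAP}-q$ with $q:=p_{\rm MAP}-p\geq 0$ vanishing at $+\infty$, multiplying out $\prod_i(p_{\rm MAP}-q(z+u_i))$ and $(p_{\rm MAP}-q(z))^l$, and observing the algebraic cancellation of the $p_{\rm MAP}^l$ constants and of all terms whose integrands do not vanish at $+\infty$. Each surviving subterm is a multilinear integral in the shifted copies $q(\cdot+u_i)$, to which BLL applies after a truncation-and-passage-to-the-limit argument whose errors are controlled by the $\mathcal{S}$-decay estimates $zp(z)\to 0$ and $z(p(z)-p_{\rm MAP})\to 0$.
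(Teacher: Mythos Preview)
Your treatment of $\mathcal{W}_{\rm single}$ is fine and matches the paper: the single-system term is rearrangement invariant, and the whole inequality must come from $\mathcal{W}_{\rm int}$. The overall plan for $\mathcal{W}_{\rm int}$ --- expand, swap integrals, reduce to a Brascamp--Lieb--Luttinger inequality for translates --- is also the right instinct and is what the paper does. The gap is in your device for dealing with the fact that $p$ does not vanish at $+\infty$.

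Writing $p=p_{\rm MAP}-q$ does not produce an integrable function: $q$ now tends to $0$ at $+\infty$ but to $p_{\rm MAP}$ at $-\infty$, so every multilinear integral $\int_{\mathds R}\prod_{i\in S}q(z+u_i)\,dz$ still diverges. You can only make sense of the \emph{differences} $\int_{\mathds R}\bigl[\binom{l}{k}q(z)^k-\sum_{|S|=k}\prod_{i\in S}q(z+u_i)\bigr]dz$, and BLL says nothing about such renormalized quantities. Worse, the binomial expansion of $(p_{\rm MAP}-q)^l$ introduces the alternating sign $(-1)^k$ in front of the $k$-th block, so even if you had a one-sided BLL-type bound for each block, half of them would point the wrong way. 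Finally, it is not automatic that the increasing rearrangement $p^*$ of $p$ corresponds to a decreasing rearrangement of $q$ in the sense required by BLL; this would need a separate argument.

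The paper circumvents all of this with a different trick: it \emph{symmetrizes} the profile by reflecting it about a large point $R$, setting $\hat p(z)=p(z)$ for $z<R$ and $\hat p(z)=\hat p(2R-z)$ for $z>R$. The resulting $\hat p$ vanishes at both $\pm\infty$ and is genuinely integrable, so classical BLL for the \emph{symmetric decreasing} rearrangement $\hat p^*$ applies directly to $\int_{\mathds R}\bigl(\int_0^1\hat p(z+u)\,du\bigr)^l dz$. One checks that the symmetrization introduces only an $o(R^{-l})$ error in $\mathcal{W}_{\rm int}$, and then unwinds the construction to pass from $\hat p^*$ back to the increasing rearrangement $p^*$. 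If you want to repair your argument, this symmetrize-then-BLL route is the missing idea.
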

\begin{proof}
See Appendix \ref{incrRearrangement}.
\end{proof}

We can thus restrict the search of minimizing profiles to the space of increasing profiles (but as already explained before we cannot exclude that 
there exist a $p(\cdot)\in \mathcal{S}$ such that $\mathcal{W}[p(\cdot)] = \mathcal{W}[p^*(\cdot)]$). In fact, the following lemma
allows to further restricts the search of minimizing profiles to strictly increasing ones. 
\begin{lemma}\label{strIncrProp}
Let $p(\cdot)\in \mathcal{S}^\prime$ be a minimizer of the potential functional $\mathcal{W}[p(\cdot)]$ that is in $\mathcal{S}^\prime$. 
Then it must be strictly increasing, i.e., $p(\cdot)\in \mathcal{S}^{\prime\prime}$.
\end{lemma}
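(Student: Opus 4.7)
The plan is to argue by contradiction. Suppose $p(\cdot)\in\mathcal{S}^\prime$ is a minimizer that is not strictly increasing. Then there is a maximal closed interval $[a,b]$ (possibly with $a=-\infty$ or $b=+\infty$) on which $p\equiv c$ for some $c\in[0,p_{\text{\tiny MAP}}]$, with $p(z)<c$ for $z<a$ and $p(z)>c$ for $z>b$. I write the functional derivative of $\mathcal{W}$ at $p$ as
\begin{equation*}
\mathcal{G}(z) \;=\; f(p(z)) - g(z),
\end{equation*}
with $f(p) = 1-(1-p)^{\frac{1}{r-1}}$ and $g(z) = \epsilon\int_0^1 \bar p(z-v)^{l-1}\,\mathrm{d}v$, $\bar p(z)=\int_0^1 p(z+u)\,\mathrm{d}u$; the DE equation \eqref{dep} is precisely $\mathcal{G}(z)=0$.

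The first step is to extract $\mathcal{G}(a)=\mathcal{G}(b)=0$ from the optimality of $p$. Viewing $p$ as the cdf of a positive measure $\mu$ on $\mathds{R}$ of total mass $p_{\text{\tiny MAP}}$, admissible mass-preserving perturbations yield a Karush--Kuhn--Tucker condition: the potential $\Phi$ defined (up to an additive constant) by $\Phi^\prime=-\mathcal{G}$ is equal to some $\lambda$ on $\operatorname{supp}(\mu)$ and $\geq\lambda$ on its complement, so $\Phi\geq\lambda$ on $\mathds{R}$. By maximality of $[a,b]$ every left-neighborhood of $a$ carries positive $\mu$-mass, so $a\in\operatorname{supp}(\mu)$ and $\Phi(a)=\lambda$; thus $a$ is a global minimum of the $C^1$ function $\Phi$, forcing $\Phi^\prime(a)=0$, i.e.\ $\mathcal{G}(a)=0$. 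The same argument gives $\mathcal{G}(b)=0$. The cases $a=-\infty$ or $b=+\infty$ follow directly from $\mathcal{G}(\pm\infty)=0$, the right-hand vanishing using the single-system DE $f(p_{\text{\tiny MAP}})=\epsilon\,p_{\text{\tiny MAP}}^{l-1}$ at $\epsilon=\epsilon_{\text{\tiny MAP}}$. Consequently $g(a)=g(b)=f(c)$.

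The key observation is the monotonicity of $g$: $\bar p$ is non-decreasing (because $p$ is), and $x\mapsto x^{l-1}$ is non-decreasing on $[0,\infty)$ for $l\geq 2$. Combined with $g(a)=g(b)$, this forces $g\equiv f(c)$ on $[a,b]$. Writing $g(z)=\epsilon\int_{z-1}^z\bar p(s)^{l-1}\,\mathrm{d}s$ and differentiating yields $\bar p(z)=\bar p(z-1)$ for every $z\in[a,b]$. But
\begin{equation*}
\bar p(z)-\bar p(z-1) \;=\; \int_0^1\bigl[p(z+u)-p(z-1+u)\bigr]\,\mathrm{d}u
\end{equation*}
is the integral of a non-negative function (by monotonicity of $p$), so its vanishing forces $p(s)=p(s-1)$ for every $s\in[z,z+1]$, which by monotonicity of $p$ implies that $p$ is constant on each $[s-1,s]$, and hence on $[z-1,z+1]$. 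Taking $z=a$, we find that $p$ is constant on $[a-1,a+1]$, contradicting the condition $p(z)<c$ for $z\in[a-1,a)$ imposed by maximality.

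The principal technical difficulty is the rigorous derivation of the KKT-type boundary stationarity $\mathcal{G}(a)=\mathcal{G}(b)=0$ in the nonlinear space $\mathcal{S}^\prime$, where admissible variations must preserve monotonicity, continuity and the prescribed asymptotic values; once this step is in hand, the contradiction follows cleanly from the monotonicity of $g$ and the structural identity for $\bar p(z)-\bar p(z-1)$.
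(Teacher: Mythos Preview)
Your approach differs from the paper's in both the first-order optimality step and the contradiction step.

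For the contradiction step, your monotonicity argument is valid and arguably cleaner than the paper's. The paper (Lemma~\ref{strIncrLemma3}) differentiates the DE equation on the flat interval and performs a case analysis on whether $a,b$ are finite, showing in each case that the resulting identity cannot hold near the boundary of the flat spot. Your argument---using that $g$ is non-decreasing together with $g(a)=g(b)$ to force $g$ constant on $[a,b]$, then unwinding $\bar p(z)=\bar p(z-1)$ to extend the flat spot strictly to the left of $a$---is more direct and avoids the case split.

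The gap is in your first step. You aim to extract only the boundary stationarity $\mathcal{G}(a)=\mathcal{G}(b)=0$ via a KKT condition over positive measures, and you correctly flag this as the ``principal technical difficulty''. The paper sidesteps this entirely: rather than invoking constrained-optimization machinery, it observes (Lemma~\ref{strIncrLemma1}) that the directional derivative of $\mathcal{W}$ at $p$ in the specific direction $\nu_p=-\mathcal{G}$ equals $-\int_{\mathds{R}}\mathcal{G}(z)^2\,\mathrm{d}z\leq 0$, so if $\mathcal{G}\not\equiv 0$ one can strictly decrease $\mathcal{W}$, contradicting minimality. This yields the full DE equation $\mathcal{G}\equiv 0$ everywhere---which of course implies your $\mathcal{G}(a)=\mathcal{G}(b)=0$---with essentially a one-line argument and no KKT theory. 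Note that the perturbation $p+\delta\nu_p$ need not preserve monotonicity, so this step implicitly uses that a minimizer in $\mathcal{S}'$ is also a minimizer in the larger space $\mathcal{S}$, which follows from the rearrangement Lemma~\ref{rearrangementlemma}.

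In short: your Step~2 is sound and offers a nice alternative to the paper's Lemma~\ref{strIncrLemma3}; but your Step~1 is where the real work lies, and the paper's device of perturbing along $-\mathcal{G}$ is both simpler and stronger than the KKT route you sketch.
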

\begin{proof}
See Appendix \ref{strIncrSection}.
\end{proof}

The final step of these preliminaries concerns a necessary condition that any {\it minimizing sequence} in $\mathcal{S}^{\prime\prime}_0$
must satisfy. It is useful to think of such profiles as cdf's. A minimizing sequence in $\mathcal{S}^{\prime\prime}_0$
is by definition any sequence $p_n(\cdot)\in \mathcal{S}^{\prime\prime}_0$ such that
\begin{equation}\label{minSequence}
 \lim_{n\rightarrow\infty}\mathcal{W}[p_{n}(\cdot)]=\inf_{p\in \mathcal{S}^{\prime\prime}_0}\mathcal{W}[p(\cdot)].
\end{equation}
Such a sequence exists as long as the functional is bounded from below. Since $W_s(p)\geq 0$ and due to Lemma \ref{poslem}, this is true.
Consider the sequence of probability measures associated to the sequence of cdfs $p_n(\cdot)$. The following lemma states
that this sequence of measures is {\it tight}.

\begin{lemma}\label{tightnesslemma}
 Let $p_n(\cdot)\in \mathcal{S}^{\prime\prime}_0$ be a minimizing sequence of cdfs. For any $\delta>0$ we can find $M_\delta > 0$ (independent of $n$) such that
 \begin{equation*}\label{spaceSdprime}
 p_n(M_\delta)-p_n(-M_\delta) > (1-\delta)p_{\text{\tiny MAP}}
\end{equation*}
for all $n$.
\end{lemma}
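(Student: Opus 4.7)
The plan is to argue by contradiction, using the fact that the pinning $p_n(0)=p_{\text{\tiny MAP}}/2$ anchors the bulk of the measure: if a fixed fraction of mass were to escape beyond $[-M,M]$, monotonicity would force $p_n(z)$ to take values in a compact subinterval of $(0,p_{\text{\tiny MAP}})$ throughout a long stretch, and on such a subinterval the single-letter potential $W_s$ is strictly positive. This would contradict an $n$-uniform upper bound on $\int W_s(p_n(z))\,\mathrm{d}z$ that I get for free from the minimizing property.

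To set up that uniform bound: since the infimum in \eqref{minSequence} is finite (bounded below by Lemma \ref{poslem} combined with $W_s\geq 0$) and $p_n$ is minimizing, $\mathcal{W}[p_n(\cdot)]\leq C_0$ for all $n$. Combining with $\mathcal{W}_{\rm int}[p_n(\cdot)]\geq -\tfrac{1}{2}p_{\text{\tiny MAP}}^l$ from Lemma \ref{poslem},
\[
\int_{\mathds{R}}W_s(p_n(z))\,\mathrm{d}z \;=\; \mathcal{W}_{\rm single}[p_n(\cdot)] \;\leq\; C_0+\tfrac{1}{2}p_{\text{\tiny MAP}}^l \;=:\; C,
\]
uniformly in $n$. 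Next, assume tightness fails, so there is $\delta_0>0$ for which, for every $M>0$, one can find $n=n(M)$ with $p_n(M)-p_n(-M)\leq (1-\delta_0)p_{\text{\tiny MAP}}$. Inserting the pinning point,
\[
p_n(M)-p_n(-M)\;=\;\bigl[p_n(M)-\tfrac{1}{2}p_{\text{\tiny MAP}}\bigr]+\bigl[\tfrac{1}{2}p_{\text{\tiny MAP}}-p_n(-M)\bigr],
\]
both summands are non-negative and sum to at most $(1-\delta_0)p_{\text{\tiny MAP}}$, so at least one is $\leq (1-\delta_0)p_{\text{\tiny MAP}}/2$. After restricting to a sequence $M_k\to\infty$ along which it is always (say) the first, I may assume $p_{n(M_k)}(M_k)\leq (1-\delta_0/2)p_{\text{\tiny MAP}}$; by monotonicity and pinning, $p_{n(M_k)}(z)\in [p_{\text{\tiny MAP}}/2,(1-\delta_0/2)p_{\text{\tiny MAP}}]$ for every $z\in [0,M_k]$.

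Since at $\epsilon=\epsilon_{\text{\tiny MAP}}$ the function $W_s$ is continuous on $[0,p_{\text{\tiny MAP}}]$ and vanishes only at the two endpoints, the quantity
\[
c_{\delta_0}\;:=\;\min_{p\in[p_{\text{\tiny MAP}}/2,\,(1-\delta_0/2)p_{\text{\tiny MAP}}]}W_s(p)
\]
is strictly positive, hence $C\geq\int_0^{M_k}W_s(p_{n(M_k)}(z))\,\mathrm{d}z\geq M_k\,c_{\delta_0}$, which is violated once $M_k>C/c_{\delta_0}$. The symmetric case (second summand small) is handled identically by integrating on $[-M_k,0]$ and using $p_{n(M_k)}(z)\leq p_{\text{\tiny MAP}}/2$ there, which traps $p_{n(M_k)}(z)$ in $[\delta_0 p_{\text{\tiny MAP}}/2,\,p_{\text{\tiny MAP}}/2]$. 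I do not expect any real obstacle beyond this routine case split; the whole argument rests on the interplay of the pinning condition, monotonicity, and the strict positivity of $W_s$ on any compact subinterval of $(0,p_{\text{\tiny MAP}})$.
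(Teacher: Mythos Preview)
Your argument is correct and follows essentially the same route as the paper: both exploit the pinning $p_n(0)=p_{\text{\tiny MAP}}/2$ and monotonicity to trap $p_n$ in a compact subinterval of $(0,p_{\text{\tiny MAP}})$ over an interval of length $M$, then use the strict positivity of $W_s$ there to contradict the uniform upper bound on $\mathcal{W}_{\rm single}[p_n(\cdot)]$ furnished by Lemma~\ref{poslem}. The only differences are cosmetic---the paper obtains its upper bound from an explicit step-function reference profile and uses the parabolic shape of $W_s$ near its zeros to get the quantitative estimate $M_\delta=O(1/\delta^2)$, whereas your compactness argument yields the same conclusion without the explicit rate.
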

\begin{proof}
 See Appendix \ref{tightnessSection}.
\end{proof}

\subsection{The Direct Method}\label{dirmeth}
The direct method in the calculus of variations
\cite{Dacorogna}-\cite{Fonseca} is a standard scheme to prove that
minimizers exist. We use this method to obtain the following theorem:
\begin{proof}[Proof of theorem \ref{existencetheorem}]
Let us take any minimizing sequence $p_{n}(\cdot)$ of cdfs i.e., a
sequence that satisfies (\ref{minSequence}). By Lemma \ref{tightnesslemma}
the corresponding sequence of measures is tight. Thus by a simple
version of Prokhorov's theorem for measures on the real line, we
can extract a (point-wise) convergent subsequence of cdfs $p_{n_k}(\cdot)
\to p_{\ell}(\cdot)$ as $k\to +\infty$ with $p_{\ell}(\cdot)\in \mathcal{S}^{\prime\prime}_0$.
By Fatou's Lemma, one can check that the potential functional
is lower-semi-continuous, which means
\begin{equation}\label{LSC}
\mathcal{W}[p_{\ell}(\cdot)]\leq \liminf_{k\rightarrow +\infty}\mathcal{W}[p_{n_k}(\cdot)].
\end{equation}
Putting \eqref{minSequence} and \eqref{LSC} together,
\small
\begin{equation*}
 \mathcal{W}[p_{\ell}(\cdot)]\leq \liminf_{k\rightarrow +\infty}\mathcal{W}[p_{n_k}(\cdot)]
 = \lim_{n\rightarrow +\infty}\mathcal{W}[p_{n}(\cdot)]
 = \inf_{\mathcal{S}^{\prime\prime}_0}\mathcal{W}[p(\cdot)],
\end{equation*}
\normalsize
On the other hand $\inf\limits_{\mathcal{S}^{\prime\prime}_0}\mathcal{W}[p(\cdot)]\leq \mathcal{W}[p_{\ell}(\cdot)]$. Thus, we conclude
that $\inf\limits_{\mathcal{S}^{\prime\prime}_0}\mathcal{W}[p(\cdot)] = \mathcal{W}[p_{\ell}(\cdot)]$.

We have shown that the minimum is achieved in $\mathcal{S}^{\prime\prime}_0$ the space 
of strictly increasing profiles pinned at the origin. Hence it is achieved in $\mathcal{S}^{\prime\prime}$
and $\mathcal{S}^{\prime}$ (note that by translation invariance, translations of $p_\ell(\cdot)$ are minimizers in these spaces). 
Finally, Lemma \ref{strIncrProp} ensures that there is no minimum in $\mathcal{S}^\prime \setminus \mathcal{S}^{\prime\prime}$.
\end{proof}

\section{Analysis of Displacement Convexity for the Functional $\mathcal{W}[p(\cdot)]$} \label{displConvSection}

This section contains the main results of the paper, namely that
the potential functional $\mathcal{W}[p(\cdot)]$ is displacement convex in $\mathcal{S}^\prime$ and strictly 
displacement convex in $\mathcal{S}^{\prime\prime}_0$. 


\subsection{Displacement Convexity of the Single-Potential Term}\label{displConvSingleSection}

We first prove that the single-potential functional $\mathcal{W}_{\rm
single}[p(\cdot)]$ is displacement convex. 
Note that the single system potential $W_s(p)$ is {\it not convex} in
the usual sense (see Figure \ref{singlePotPlotat4881}).
\begin{proposition}\label{displacementconvsingle}
Let $p_0(\cdot)$ and $p_1(\cdot)$ be in $\mathcal{S}^\prime$ and let
$p_{\lambda}(\cdot)$ the interpolating profile as defined in Section
\ref{mapInvProf1}. Then
\begin{equation*}
\mathcal{W}_{\rm single}[p_{\lambda}(\cdot)] = (1-\lambda)\mathcal{W}_{\rm single}[p_0(\cdot)]+\lambda\mathcal{W}_{\rm single}[p_1(\cdot)].
\end{equation*}
\end{proposition}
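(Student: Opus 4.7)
The plan is to exploit the fact that $\mathcal{W}_{\rm single}[p(\cdot)] = \int_{\mathds{R}} W_s(p(z))\,dz$ is a purely pointwise functional of $p$, which after the change of variables $q=p(z)$ becomes affine in the inverse function $z(\cdot)$. Since displacement interpolation is precisely linear interpolation on the inverse, the functional must become affine in $\lambda$, yielding equality (not just $\leq$).

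More concretely, I would first consider the case $p(\cdot)\in\mathcal{S}^{\prime\prime}$, where $p$ is a strict bijection from $\mathds{R}$ onto $(0,p_{\text{\tiny MAP}})$. Setting $q=p(z)$ gives $dz=z'(q)\,dq$, so
\begin{equation*}
\mathcal{W}_{\rm single}[p(\cdot)] \;=\; \int_{0}^{p_{\text{\tiny MAP}}} W_s(q)\,z'(q)\,dq.
\end{equation*}
For the interpolation, $z_\lambda(q)=(1-\lambda)z_0(q)+\lambda z_1(q)$, hence $z_\lambda'(q)=(1-\lambda)z_0'(q)+\lambda z_1'(q)$, and linearity of the integral immediately gives the claimed equality. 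For general $p(\cdot)\in\mathcal{S}^\prime$, flat pieces of $p$ correspond to jumps of $z(\cdot)$, so I would phrase the change of variables in Riemann--Stieltjes form,
\begin{equation*}
\mathcal{W}_{\rm single}[p(\cdot)] \;=\; \int_{0}^{p_{\text{\tiny MAP}}} W_s(q)\,dz(q),
\end{equation*}
where a jump of $z$ from $z_-$ to $z_+$ at value $q_0$ contributes $W_s(q_0)(z_+-z_-)$, matching the contribution of the flat slab $\{z\in[z_-,z_+]:p(z)=q_0\}$ to the original integral. Since $dz_\lambda=(1-\lambda)dz_0+\lambda dz_1$ as Stieltjes measures, the same linearity argument closes the case.

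Before the change of variables I would record two points that guarantee every integral makes sense. First, at $\epsilon=\epsilon_{\text{\tiny MAP}}$ one has $W_s(0)=W_s(p_{\text{\tiny MAP}})=0$ (this is precisely what pins the noise level), so the integrand vanishes at both endpoints of $[0,p_{\text{\tiny MAP}}]$ and the $z$\nobreakdash-integral converges for any $p(\cdot)\in\mathcal{S}$ thanks to the tail decay built into the definition of $\mathcal{S}$. Second, the interpolating profile $p_\lambda(\cdot)$ still lies in $\mathcal{S}^\prime$ (as noted in Section \ref{displConvDefSection}), so both sides of the identity are well defined.

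The only real subtlety, and what I expect to be the main obstacle, is the careful bookkeeping around flat regions of $p$ and the passage through the inversion map $z(\cdot)$; one must check that the Stieltjes change of variables is valid for the continuous but possibly non\nobreakdash-strictly increasing profiles in $\mathcal{S}^\prime$, and that no boundary term is lost as the $q$\nobreakdash-integration is extended to the closed interval $[0,p_{\text{\tiny MAP}}]$. The vanishing of $W_s$ at the two endpoints, together with monotonicity of $z(\cdot)$, handles this cleanly. Once the change of variables is justified, the proposition reduces to the trivial observation that a linear functional of $z_\lambda$ is affine in $\lambda$.
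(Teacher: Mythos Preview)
Your proposal is correct and follows essentially the same approach as the paper: perform the change of variables $z\mapsto p$ to write $\mathcal{W}_{\rm single}[p_\lambda(\cdot)]=\int_0^{p_{\text{\tiny MAP}}} W_s(p)\,dz_\lambda(p)$, then use the linearity $dz_\lambda=(1-\lambda)\,dz_0+\lambda\,dz_1$ and change back. The paper's version is more terse and works directly in Stieltjes form without separating out the strictly increasing case, but the argument is the same.
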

\begin{proof}
Recall that $\mathcal{W}_{\rm
single}[p(\cdot)]=\int_{\mathds{R}}\mathrm{d}z W_{s}(p(z))$. Recall
also that $p_{\lambda}(z)$ as defined in Section \ref{mapInvProf1} is the inverse
of $z_{\lambda}(p) = (1-\lambda) z_0(p) + \lambda z_1(p)$.  Thus
\begin{equation*}
\begin{split}
\int_{\mathds{R}}&\mathrm{d}z W_{s}(p_{\lambda}(z)) = \int_{0}^{p_{\text{\tiny MAP}}}\mathrm{d}z_{\lambda}(p)W_{s}(p)\\
&=(1-\lambda)\int_{0}^{p_{\text{\tiny MAP}}}\mathrm{d}z_0(p)W_{s}(p) + \lambda\int_{0}^{p_{\text{\tiny MAP}}}\mathrm{d}z_1(p)W_{s}(p)\\
&=(1-\lambda)\int_{\mathds{R}}\mathrm{d}zW_{s}(p_0(z))+\lambda\int_{\mathds{R}}\mathrm{d}zW_{s}(p_1(z)).
\end{split}
\end{equation*}
\end{proof}
Thus, the function $\lambda\rightarrow\mathcal{W}_{\rm
single}[p_\lambda(\cdot)]$ is linear, hence convex.

\subsection{Displacement Convexity of the Interaction-Potential Term}\label{displConvIntSection}
The proof of displacement convexity of the interaction
potential term is more involved.
\begin{proposition}\label{displConvIntTheorem} Let $p_0(\cdot)$ and
$p_1(\cdot)$ be in $\mathcal{S}^\prime$ and let $p_{\lambda}(\cdot)$ the
interpolating profile as defined in Section \ref{mapInvProf1}. Then 
\begin{equation}\label{dc}
 \mathcal{W}_{\rm int}[p_{\lambda}(\cdot)] \leq (1-\lambda)\mathcal{W}_{\rm int}[p_0(\cdot)]+\lambda\mathcal{W}_{\rm int}[p_1(\cdot)].
\end{equation}
\end{proposition}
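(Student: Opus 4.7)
My plan is to rewrite $\int_{\mathds{R}}(p^l - q^l)\,dz$ in a form in which the inverse profile $z(\cdot)$ and the interpolation parameter $\lambda$ enter only through affine expressions, so that convexity in $\lambda$ can be read off pointwise and then integrated. The first step is a layer-cake identity: for any $p(\cdot)\in\mathcal{S}^\prime$ with inverse $z(\cdot)$, one has $p(z+u)\ge p^\prime\iff z\ge z(p^\prime)-u$, which, applied $l$ times, gives
\begin{equation*}
q(z)^l=\int_{[0,1]^l}du\int_{[0,p_{\text{\tiny MAP}}]^l}dp\,\mathbb{1}_{z\ge \max_i [z(p_i)-u_i]},
\end{equation*}
\begin{equation*}
p(z)^l=\int_{[0,1]^l}du\int_{[0,p_{\text{\tiny MAP}}]^l}dp\,\mathbb{1}_{z\ge \max_i z(p_i)},
\end{equation*}
where the $du$-integral in the second line is trivial and is inserted only for alignment. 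Subtracting, noting $u_i\ge 0$ implies $\max_i z(p_i)\ge \max_i[z(p_i)-u_i]$, and using $\int_{\mathds{R}}(\mathbb{1}_{z\ge A}-\mathbb{1}_{z\ge B})\,dz=B-A$ whenever $A\ge B$, a Fubini interchange delivers
\begin{equation*}
\int_{\mathds{R}}(p^l-q^l)\,dz=\int_{[0,1]^l}du\int_{[0,p_{\text{\tiny MAP}}]^l}dp\,\Big\{\max_i[z(p_i)-u_i]-\max_i z(p_i)\Big\}.
\end{equation*}

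With this representation in hand, I apply the displacement interpolation $z_\lambda(p)=(1-\lambda)z_0(p)+\lambda z_1(p)$ and establish two pointwise (in $(u,p)$) facts about the bracketed integrand. First, $\max_i z_\lambda(p_i)$ is \emph{linear} in $\lambda$: since $z_0$ and $z_1$ are both strictly increasing, they share the same ordering on $(p_1,\ldots,p_l)$, so the index $i^*=\arg\max_j p_j$ is simultaneously the argmax of $\{z_0(p_i)\}_i$, $\{z_1(p_i)\}_i$ and $\{z_\lambda(p_i)\}_i$ for every $\lambda$, giving $\max_i z_\lambda(p_i)=(1-\lambda)\max_i z_0(p_i)+\lambda\max_i z_1(p_i)$. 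Second, $\max_i[z_\lambda(p_i)-u_i]$ is \emph{convex} in $\lambda$ as a pointwise maximum of the $l$ affine functions $\lambda\mapsto (1-\lambda)[z_0(p_i)-u_i]+\lambda[z_1(p_i)-u_i]$.

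The bracketed integrand is thus convex minus linear, hence convex in $\lambda$, for each fixed $(u,p)$. Since convexity is preserved by integration against the positive measure $du\,dp$ and by multiplication by $\epsilon/l>0$, this yields the desired inequality \eqref{dc}. The step that I expect to require the most care is the Fubini interchange in the layer-cake rewrite, since the individual integrals $\int_{\mathds{R}}\mathbb{1}_{z\ge A}\,dz$ are divergent while the signed combination is convergent; the uniform bound $|\max_i z(p_i)-\max_i[z(p_i)-u_i]|\le \max_i u_i\le 1$ on a finite-volume $(u,p)$-domain, together with the tail decay of profiles in $\mathcal{S}$, is what I will use to make the rewrite rigorous. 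As a bonus, this decomposition also suggests how to obtain strict displacement convexity in $\mathcal{S}^{\prime\prime}_0$ later: when $p_0\ne p_1$ are pinned at the origin, $z_1-z_0$ is non-constant, so on a set of $(u,p)$ of positive Lebesgue measure two of the affine functions in (ii) cross within $\lambda\in(0,1)$ and the pointwise convex inequality becomes strict.
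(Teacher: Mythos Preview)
Your argument is correct and is essentially the paper's proof reparametrized: where the paper writes $\mathcal{W}_{\rm int}$ as an integral of a kernel $V(x_1,\dots,x_l)$ against $\prod_i d\mu_0(x_i)$ and uses the push-forward $T_\lambda$, you change variables via $x_i=z_0(p_i)$ to integrate against Lebesgue measure $\prod_i dp_i$ on $[0,p_{\text{\tiny MAP}}]^l$ and use $z_\lambda$ directly, arriving at the same kernel $\max_i(x_i-u_i)-\max_i x_i$ (equivalently the paper's $-\min_i(d_{1i}+u_i)$). Your observation that $\max_i z_\lambda(p_i)$ is \emph{linear} in $\lambda$ (common argmax, since $z_0,z_1$ are increasing---note that ``strictly'' is unnecessary here and not guaranteed in $\mathcal{S}^\prime$) is exactly the paper's restriction to the ordered sector $S_x$ preserved by the monotone map $T$, and your ``max of affines is convex'' is Lemma~\ref{conclem}; the Fubini caveat you flag is precisely what the paper absorbs into the signed integrand in \eqref{intTermMu}.
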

\begin{proof}
Since $p$ can be seen as a cdf we associate with it its probability
measure $\mu$ such that $p(z)=p_{\rm \small
MAP}\int_{-\infty}^{z}\mathrm{d}\mu(x)$.  Let us rewrite
the  interaction functional in the form
\begin{equation}\label{VillanisForm}
\displaystyle \mathcal{W}_{\rm int}[p_\lambda(\cdot)]=\int_{\mathds{R}}\mathrm{d}\mu_\lambda(x_{1})\dots
\mathrm{d}\mu_\lambda(x_{l})V(x_{1},\dots,x_{l}),
\end{equation}
where $V(x_1,\dots, x_l)$ is a totally symmetric ``kernel function''
that we will compute.  There is an argument (see \cite{Villani})
that allows to conclude \eqref{dc} whenever $V$ is jointly convex
(in the usual sense). Let us briefly explain this argument here.
Consider the measures $\mu_0$, $\mu_1$ associated to cdfs
$p_0(\cdot)$, $p_1(\cdot)$. Then there exists a unique increasing map
$T:\mathds{R}\to \mathds{R}$ such that $\mu_1 = T\#\mu_0$. Here
$T\#\mu_0$ is the push-forward\footnote{Given a measurable map
$T:\mathds{R}\rightarrow\mathds{R}$, the push-forward of $\mu$ under
$T$ is the measure $T\#\mu_0$ such that, for any bounded continuous
function $\phi$,
$\int\limits_{\mathds{R}}\phi(T(x))\mathrm{d}\mu(x)=\int\limits_{\mathds{R}}\phi(x)\mathrm{d}(T\#\mu)(x)$.}
of $\mu_0$ under $T$.  Then from $x_\lambda(p) = (1-\lambda) x_0(p) +
\lambda x_1(p)$ we have that $\mu_\lambda = T_\lambda\#\mu_0$
where $T_\lambda(x) = \lambda x + (1-\lambda) T(x)$. Equation
\eqref{VillanisForm} can be written as
\begin{align}
& \mathcal{W}_{\rm int}[p_\lambda(\cdot)] =\int_{\mathds{R}}\mathrm{d}\mu_0(x_{1})\dots
\mathrm{d}\mu_0(x_{l})V(T_\lambda(x_{1}),\dots,T_\lambda(x_{l}))
\nonumber \\ &
= l!\int_{S_x}\mathrm{d}\mu(x_{1})\dots
\mathrm{d}\mu(x_{l})V(T_\lambda(x_{1}),\dots,T_\lambda(x_{l})).
\label{againstmeasures}
\end{align}
In the second equality we restrict the integrals over the sector
$S_{x}=\{\mathbf{x}=(x_1,\cdots,x_l):\,x_i\geq x_j \;\mathrm{if}\;i<j \}$, which is possible since $V$ is
totally symmetric. Now it is important to notice that since $T$ is
an increasing map we have $T_\lambda(x_{1})\geq\dots \geq T_\lambda(x_{l})$ for any $\lambda\in [0,1]$. Moreover
the $\lambda$ dependence in the kernel function is linear. Thus the
proof of displacement convexity ultimately rests on checking that
the kernel function is jointly convex in one sector, say $S_x$.\footnote{By symmetry, convexity in one
sector implies convexity in other sectors. However this does not
mean that convexity holds if arguments are taken in different
sectors. And, indeed in the present problem one can check that
convexity only holds within each sector.} In fact the kernel function is translation invariant and can be expressed
as a function of the distances $d_{1i}\equiv x_1 - x_i$, $i=1,\dots, l$. We will prove joint convexity of $V$ as a function of these distances. 

Now it remains to compute $V$ and to investigate its joint convexity.  With
appropriate usage of Fubini's theorem and after some manipulations,
we find
\begin{align}
&\mathcal{W}_{\rm int}[p(\cdot)]=
\frac{\epsilon p_{\text{\tiny MAP}}^{l}}{l}\int_{\mathds{R}^l}\prod_{i=1}^{l}\mathrm{d}\mu_0(x_{i})\Biggl\{\int_{[0,1]^{l}}
\prod_{i=1}^{l}\mathrm{d}u_{i}\notag \\
& \quad\int_{\mathds{R}}\mathrm{d}z
\bigg(\prod_{i=1}^l\theta(z-x_{i})-\prod_{i=1}^l\theta(z-(x_{i}-u_{i}))\bigg)\biggr\}, \label{intTermMu}
\end{align}
where $\theta(x)$ denotes the Heaviside step function. So the kernel $V(x_1,\dots,x_l)$ in (\ref{VillanisForm}) is the
integrand of the first $l$ integrals in (\ref{intTermMu}). Our goal
henceforth is to prove that $V$ is convex in the usual sense.
We will prove that in fact $V_{\mathbf{u}}$
is convex for all fixed $\mathbf{u}$, where $\mathbf{u} =(u_1,\dots , u_l)$, 
\begin{equation*}
\begin{split}
&V_{\mathbf{u}}(\mathbf{x})=\int_{\mathds{R}}dz
\biggl(\prod_{i=1}^l\theta(z-x_{i})-\prod_{i=1}^l\theta(z-(x_{i}-u_{i}))\biggr).\label{subproblem}
\end{split}
\end{equation*}
We recall here that we restrict our analysis to the sector of 
the space of variables $S_x$. Also, we remark that
$\prod_{i=1}^{l}\theta(a_{i})=\theta(\max\limits_{i=1\dots l}a_{i})$.
We observe that
$V_{\mathbf{u}}$ can be written in terms of the distances $d_{1i}= x_1 - x_i$, $i=2,\dots, l$ as (here $d_{1i}\equiv 0$)
\begin{equation*}
\begin{split}
V_{\mathbf{u}}(\mathbf{x})&=\int_{\mathds{R}}\mathrm{d}z\{\theta(z-x_{1})-\theta(\max\limits_{i=1\dots l}(z-(x_{i}-u_{i}))) \}\\
&=-\min\limits_{i=1\dots l}(x_{1}-x_{i}+u_{i}) = -\min\limits_{i=1\dots l}(d_{1i}+u_{i}) \label{VuInD}
\end{split}
\end{equation*}
Lemma \ref{conclem} below states that $V_{\mathbf{u}}$ is jointly convex in $S_x$ for all fixed $\mathbf{u}$. This implies that $V(\mathbf{x})$ is
jointly convex in $S_x$. 
This completes the proof.
\end{proof}
\begin{lemma}\label{conclem}
The function $f_{\mathbf{u}}(\mathbf{d})=\min\limits_{i}(d_{1i}+u_{i})$ is concave
in $\mathbf{d}$, where $\mathbf{d}=(d_{12},\dots,d_{1l})$ and $d_{1i}\equiv 0$. 
\end{lemma}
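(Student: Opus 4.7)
The plan is to reduce the claim to the standard fact that the pointwise minimum of a family of concave (in fact, affine) functions is concave. Since $f_{\mathbf{u}}(\mathbf{d})$ is defined as a minimum over $i=1,\dots,l$ of certain expressions in $\mathbf{d}=(d_{12},\dots,d_{1l})$, the structural approach is to isolate the individual terms, check that each is affine in $\mathbf{d}$, and then invoke this standard closure property.

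\medskip

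\noindent\textbf{Step 1: each summand is affine.} For each fixed $i\in\{1,\dots,l\}$ I would define $g_i(\mathbf{d}) = d_{1i}+u_i$. For $i=1$, recall the convention $d_{11}\equiv 0$, so $g_1(\mathbf{d})=u_1$ is a constant, clearly affine. For $i\geq 2$, $g_i$ is the composition of the coordinate projection $\mathbf{d}\mapsto d_{1i}$ with the translation by $u_i$, which is again affine in $\mathbf{d}$.

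\medskip

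\noindent\textbf{Step 2: minimum of affine is concave.} I would then observe that $f_{\mathbf{u}}=\min_{i=1,\dots,l} g_i$, and invoke the standard fact: if $g_1,\dots,g_l$ are concave functions on a convex domain, then so is $\min_i g_i$. Since affine functions are concave, $f_\mathbf{u}$ is concave. The short verification, which I would include for completeness, is that for any $\mathbf{d},\mathbf{d}'$ and $\lambda\in[0,1]$, for every index $i$
\begin{equation*}
g_i(\lambda\mathbf{d}+(1-\lambda)\mathbf{d}') = \lambda g_i(\mathbf{d})+(1-\lambda)g_i(\mathbf{d}') \geq \lambda f_{\mathbf{u}}(\mathbf{d})+(1-\lambda)f_{\mathbf{u}}(\mathbf{d}'),
\end{equation*}
and taking the minimum over $i$ on the left yields $f_{\mathbf{u}}(\lambda\mathbf{d}+(1-\lambda)\mathbf{d}') \geq \lambda f_{\mathbf{u}}(\mathbf{d})+(1-\lambda)f_{\mathbf{u}}(\mathbf{d}')$.

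\medskip

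There is really no serious obstacle here; the only subtlety worth flagging is keeping the convention $d_{11}\equiv 0$ straight so that the $i=1$ term is correctly identified as a constant (which is the reason the convexity statement for $V_{\mathbf{u}}(\mathbf{x})=-f_{\mathbf{u}}(\mathbf{d})$ holds on the whole sector $S_x$ and not merely up to an additive constant degeneracy). Once these pieces are in place, the concavity of $f_{\mathbf{u}}$, and hence the convexity of $V_{\mathbf{u}}$ needed in the proof of Proposition~\ref{displConvIntTheorem}, follows immediately.
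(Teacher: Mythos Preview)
Your proof is correct and essentially identical to the paper's own argument: both verify the concavity inequality directly by writing $f_{\mathbf u}$ as a minimum of affine functions and using that $\min_i(a_i+b_i)\geq \min_i a_i + \min_i b_i$. The only difference is presentational---you name the functions $g_i$ and invoke the ``min of concave is concave'' principle explicitly, whereas the paper carries out the same chain of inequalities in a single display without naming the intermediate objects.
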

\begin{proof}
Let $\mathbf{d}$ and $\mathbf{d}'$ be two instances of the argument of $f_{\mathbf{u}}$. Then, for $\lambda\in[0,1]$,
\begin{equation*}
\begin{split}
f_{\mathbf{u}}((1-\lambda)\mathbf{d}+&\lambda\mathbf{d}')=\min\limits_{i}((1-\lambda) d_{1i}+\lambda d_{1i}'+u_{i})\\
&=\min\limits_{i}((1-\lambda) (d_{1i}+u_{i})+\lambda(d_{1i}'+u_{i}))\\
&\geq (1-\lambda)\min\limits_{i}(d_{1i}+u_{i})+\lambda\min\limits_{i}(d_{1i}'+u_{i})\\
&= (1-\lambda) f_{\mathbf{u}}(\mathbf{d})+\lambda f_{\mathbf{u}}(\mathbf{d}').
\end{split}
\end{equation*}
This shows concavity. 

\end{proof}

\subsection{Strict displacement convexity}\label{substrict}

We now prove that for $p_0(\cdot)$ and $p_1(\cdot)$ in $\mathcal{S}^{\prime\prime}_0$ the inequality \eqref{dc} is strict whenever $\lambda \neq 0,1$. 

Since we already know that the kernel function $V(\mathbf{x})$ is convex 
in the sector $S_x$, it is sufficient to show it is {\it strictly convex as a function of the distances} 
$\mathbf{d}=(d_{12}, \dots, d_{1l})$
on some {\it subset of positive measure} of the $\mathbf{d}$-space. This is sufficient because in 
\eqref{againstmeasures} the kernel function is integrated against measure $\mu_0$ with full support $\mathds{R}$. Recall that 
$\mu_0$ has full support because the profile $p_0(\cdot)$ belongs to $\mathcal{S}^{\prime\prime}$. 

In Appendix \ref{explicitkernel} we give explicit formulas for $V$ in terms of 
the distances $d_{1i}$. These formulas allow to prove that $V$
is strictly convex in a subset of non-zero measure. Concretely, this subset is a small enough neighborhood of the origin $d_{1i}=0$, $i=2,\dots,l$. 

We remark that it is not easy to see that $V$ is convex in the whole $\mathbf{d}$-space directly from these formulas (however we already know that $V$ is convex by the method of proof of the previous section). In fact the formulas show that it is certainly not strictly convex when some of the distances become greater than $1$.

\section{Proofs of theorem \ref{displacementconvtheorem} and corollary \ref{basiccoro}}\label{last}

In this paragraph, for completeness, we wrap up the proofs of Theorem \ref{displacementconvtheorem} and Corollary \ref{basiccoro}.

\begin{proof}[Proof of Theorem \ref{displacementconvtheorem}]
Displacement convexity of $\mathcal{W}[p(\cdot)]$ in $\mathcal{S}^\prime$ follows from Propositions \ref{displacementconvsingle}
and \ref{displConvIntTheorem}. In section \ref{substrict} we have shown that $\mathcal{W}_{\rm\tiny int}[p(\cdot)]$ is strictly displacement convex in $\mathcal{S}^{\prime\prime}_0$. Combining this with  Proposition \ref{displacementconvsingle} immediately yields 
strict displacement convexity of $\mathcal{W}[p(\cdot)]$ in $\mathcal{S}^{\prime\prime}_0$.
\end{proof}

\begin{proof}[Proof of Corollary \ref{basiccoro}]
Uniqueness of the minimizer of $\mathcal{W}[p(\cdot)]$ in $\mathcal{S}^{\prime\prime}_0$ follows from strict displacement convexity. 
Indeed suppose there are two distinct minimizers $p_1(\cdot)$ and $p_2(\cdot)$ with 
 $\mathcal{W}[p_1(\cdot)] = \mathcal{W}[p_2(\cdot)]$
and let $p_\lambda(\cdot)$ be the displacement interpolant. Then
$\mathcal{W}[p_\lambda(\cdot)] < (1-\lambda)\mathcal{W}[p_1(\cdot)] + \lambda \mathcal{W}[p_2(\cdot)]$ (for $\lambda \neq 0, 1$) which implies 
$\mathcal{W}[p_\lambda(\cdot)] < \mathcal{W}[p_1(\cdot)]$. We must also have 
$\mathcal{W}[p_1(\cdot)]\leq \mathcal{W}[p_\lambda(\cdot)]$, hence $\mathcal{W}[p_1(\cdot)] < \mathcal{W}[p_\lambda(\cdot)]$
which is a contradiction. 

Let us now show that all minimizers $p_1(\cdot)\in \mathcal{S}^\prime$ are translates of the unique 
minimizer $p_{0}(\cdot)\in \mathcal{S}^{\prime\prime}_0$. We know from Theorem \ref{strIncrProp} that 
$p_1(\cdot)\in \mathcal{S}^{\prime\prime}$, i.e., it has to be strictly 
increasing. Thus there is a unique position, say $z_1$ such that $p_1(z_1) = p_{\rm\tiny MAP}/2$. Consider the set of profiles 
$\mathcal{S}^{\prime\prime}_{z_1}$ obtained by translating the set $\mathcal{S}^{\prime\prime}_{0}$ by the vector $z_1$. Clearly 
$p_1(\cdot)$ is the unique minimizer in $\mathcal{S}^{\prime\prime}_{z_1}$. But it is also clear that $\mathcal{W}[p_{0}(\cdot - z_1)] = 
\mathcal{W}[p_1(\cdot)]$. Thus, since $p_{0}(\cdot - z_1)\in \mathcal{S}^{\prime\prime}_{z_1}$, we must have 
$p_{0}(\cdot - z_1)=p_1(\cdot)$ as announced.

Finally let us discuss the consequences for the solutions of the DE equation \eqref{dep}. We show that in the space $\mathcal{S}^\prime$ 
a solution of the DE equation is necessarily a minimum of $\mathcal{W}[p(\cdot)]$. This implies the statement of the theorem.
Let $p_0(\cdot)\in \mathcal{S}^\prime$ a solution of the DE equation. Consider any other profile $p_1(\cdot)\in \mathcal{S}^\prime$, 
and consider the 
displacement interpolant $p_\lambda(\cdot)$. A computation of the derivative shows 
that $\frac{d}{d\lambda}\mathcal{W}[p_\lambda(\cdot)]\vert_{\lambda=0} = 0$ because $p_0(\cdot)$ is a solution of the DE equation. 
Since the map $\lambda \to \mathcal{W}[p_\lambda(\cdot)]$
is convex, $\lambda=0$ must be a minimum of this map. Thus $\mathcal{W}[p_\lambda(\cdot)]\geq \mathcal{W}[p_0(\cdot)]$
and in particular with $\lambda =1$ we get $\mathcal{W}[p_1(\cdot)]\geq \mathcal{W}[p_0(\cdot)]$. Thus $p_0(\cdot)$ is a minimum 
of the functional in $\mathcal{S}^\prime$.
\end{proof}

\section{Conclusion}
In this paper, we demonstrate a new tool for the analysis of spatially
coupled codes, namely the concept of displacement convexity. This
tool makes use of an alternative structure of probability distributions
and hence applies to an appropriate space of increasing profiles.
We prove that the potential functional governing the $(l,r)$-regular
ensemble is (strictly) convex
under the alternative structure. This result implies that the
potential functional admits a unique minimizing profile, or
equivalently, that the DE equations governing the system admit a
unique FP solution, in an appropriate space of profiles.

There are several questions that can be posed in this context.
First, we recall that the original potential functional governing
the system at hand is in discrete form. Can one extend the displacement
convexity framework to the discrete setting?  Displacement convexity
can presumably be used to analyze a large range of problems with
flavors similar to the present one. The generalization to irregular LDPC ensembles 
is immediate for pure displacement convexity (the question of strict convexity is however more subtle). It is 
interesting to consider more general one dimensional
scalar recursions as in \cite{Pfister} and find out what are the general restrictions
on the single system potential that still allow to prove displacement (strict) convexity.
It also remains to be seen if these techniques can be applied 
 to general BMS channels, the random K-SAT and Q-coloring
problems to name a few. We plan to come back to these problems in
the future.

\appendices

\section{Lower Boundedness of the Interaction Potential} \label{positivitySection}

\begin{proof}[Proof of Lemma \ref{poslem}]
From Jensen's inequality, 
\begin{equation}\label{jensen}
 \displaystyle \int\limits_{0}^{1}\mathrm{d}u\;p(z+u)^{l}\geq\Big( \int\limits_{0}^{1}\mathrm{d}u\; p(z+u)\Big)^{l}.
\end{equation}
Further,
\begin{align*}
\displaystyle &\int\limits_{-M}^{M}\mathrm{d}z\int\limits_{0}^{1}\mathrm{d}u\;p(z+u)^{l}\overset{(a)}{=}\int\limits_{0}^{1}\mathrm{d}u\int\limits_{-M+u}^{M+u}\mathrm{d}z'\;p(z')^{l}\\
&=\int\limits_{0}^{1}\mathrm{d}u\Big(\int\limits_{-M+u}^{-M}\mathrm{d}z'p(z')^{l}+\int\limits_{-M}^{M}\mathrm{d}z'p(z')^{l}+\int\limits_{M}^{M+u}\mathrm{d}z'p(z')^{l}\Big),
\end{align*}
where (a) is obtained by first changing the order of integration
(which is admissible since the integral converges) and
then making the change of variable $z'=z+u$. And so, by combining
this identity with \eqref{jensen} we obtain
\begin{equation*}
\begin{split}
\displaystyle &\int\limits_{-M}^{M}\mathrm{d}z\;\Bigg\{p(z)^{l}-\Big( \int\limits_{0}^{1}\mathrm{d}u\; p(z+u)\Big)^{l}\Bigg\}\\
&\qquad+\int\limits_{0}^{1}\mathrm{d}u\int\limits_{-M+u}^{-M}\mathrm{d}z'\;p(z')^{l}+
\int\limits_{0}^{1}\mathrm{d}u\int\limits_{M}^{M+u}\mathrm{d}z'\;p(z')^{l}\geq 0.
\end{split}
\end{equation*}
Now we take the limit $M\to +\infty$ for each term of this inequality.
By an application of Lebesgue's dominated convergence theorem, the
last two terms tend to zero and $\frac{1}{2}p_{\text{\tiny MAP}}^{l}$, respectively. 
Therefore the limit of the first term is bounded from below by $-\frac{1}{2}p_{\text{\tiny MAP}}^{l}$,
which concludes the proof.  \end{proof}

\section{Truncation of Profiles} \label{truncationSection}

\begin{proof}[Proof of Lemma \ref{truncationLemma}]
It is easy to prove that a truncation of $p(z)$ at $p_{\text{\tiny
MAP}}$ yields a smaller value for the single system potential
$\mathcal{W}_s(p(z))$ (see e.g. the Figure \ref{singlePotPlotat4881}
for an intuition).  Therefore we have $\mathcal{W}_{\rm
single}[p(\cdot)]\geq \mathcal{W}_{\rm single}[\bar p(\cdot)]$.

We now treat the functional corresponding to the interaction term. We define the function
$g$ as $g(z)=p(z)-\bar{p}(z)$ and notice that:
\begin{equation}
\begin{cases}
p(z)\leq p_{\text{\tiny MAP}} \Rightarrow g(z)=0 \; \mathrm{and}\; \bar{p}(z)=p(z),\\
p(z)> p_{\text{\tiny MAP}} \Rightarrow g(z)>0 \; \mathrm{and}\; \bar{p}(z)=p_{\text{\tiny MAP}}. \label{propertiesG}
\end{cases}
\end{equation}

We need to show that $\mathcal{W}_{\rm int}[\bar{p}(\cdot)]\leq\mathcal{W}_{\rm int}[p(\cdot)]$, or equivalently that:
\small
\begin{equation*}
\begin{split}
\displaystyle &\int\limits_{\mathds{R}}\mathrm{d}z\,\Big\{ \bar{p}(z)^{l}-\Big( \int\limits_{0}^{1}\mathrm{d}u\, \bar{p}(z+u) \Big)^{l} \Big\}\\
&\;\;\leq\int\limits_{\mathds{R}}\mathrm{d}z\,\Big\{ (\bar{p}(z)+g(z))^{l}-\Big( \int\limits_{0}^{1}\mathrm{d}u\,(\bar{p}(z+u)+g(z+u)) \Big)^{l} \Big\}.
\end{split}
\end{equation*}
\normalsize
Using the binomial expansion this is equivalent to
\begin{align*}
\sum\limits_{i=0}^{l-1} & {l \choose i} \int\limits_{\mathds{R}}\mathrm{d}z\,\Bigg\{ \bar{p}(z)^{i}g(z)^{l-i}
\\ &
-\Big( \int\limits_{0}^{1}\mathrm{d}u\,\bar{p}(z+u) \Big)^{i} \Big( \int\limits_{0}^{1}\mathrm{d}u\,g(z+u) \Big)^{l-i} \Bigg\} 
\geq 0.
\end{align*}
In the following steps, we show that the integral inside the summation above is positive for any fixed value of $i$; the inequality follows directly. We see that:
\begin{equation*}
\begin{split}
\displaystyle \Big(& \int\limits_{0}^{1}\mathrm{d}u\,\bar{p}(z+u) \Big)^{i} \Big( \int\limits_{0}^{1}\mathrm{d}u\,g(z+u) \Big)^{l-i}\\
&\leq p_{\text{\tiny MAP}}^{i}\Big( \int\limits_{0}^{1}\mathrm{d}u\,g(z+u) \Big)^{l-i}\leq p_{\text{\tiny MAP}}^{i}\int\limits_{0}^{1}\,\mathrm{d}u\,g(z+u)^{l-i} 
\label{lessThan1},
\end{split}
\end{equation*}
where the first inequality is due to the property $\bar{p}(z)\leq
p_{\text{\tiny MAP}}$ and the second is using the convexity of the
function $f(g)=g^{l};\;g\geq 0$. We integrate over $z$, then make
the change of variable $z'=z+u$ on the right-hand side to obtain:
\begin{equation}
\begin{split}
\displaystyle &\int\limits_{\mathds{R}}\,\mathrm{d}z\, \Big( \int\limits_{0}^{1}\mathrm{d}u\,\bar{p}(z+u) \Big)^{i} \Big( \int\limits_{0}^{1}\mathrm{d}u\,g(z+u) \Big)^{l-i}\\
&\leq \int\limits_{\mathds{R}}\,\mathrm{d}z\, p_{\text{\tiny MAP}}^{i}\int\limits_{0}^{1}\,\mathrm{d}u\,g(z+u)^{l-i} = 
\int\limits_{\mathds{R}}\,\mathrm{d}z'\,p_{\text{\tiny MAP}}^{i}g(z')^{l-i}
\end{split}\label{differenceG}
\end{equation}
Using the properties of $g$ in (\ref{propertiesG}), we remark that
$\int\limits_{\mathds{R}}\,\mathrm{d}z\,p_{\text{\tiny MAP}}^{i}g(z)^{l-i}
= \int\limits_{\mathds{R}}\,\mathrm{d}z\,\bar{p}(z)^{i}\,g(z)^{l-i}
$ and so the difference of quantities in the inequality (\ref{differenceG})
is integrable, and thus we obtain:
\small
\begin{equation*}
\displaystyle \int\limits_{\mathds{R}}\mathrm{d}z \Big( \int\limits_{0}^{1}\mathrm{d}u\,\bar{p}(z+u) \Big)^{i} \Big( \int\limits_{0}^{1}\mathrm{d}u\,g(z+u) \Big)^{l-i}\leq \int\limits_{\mathds{R}}\mathrm{d}z\,\bar{p}(z)^{i}g(z)^{l-i}
\end{equation*}
\normalsize
for any $i$. This yields the desired result $\mathcal{W}_{\rm int}[p(\cdot)]\geq \mathcal{W}_{\rm int}[\bar p(\cdot)]$.
\end{proof}

\section{Rearrangement of Profiles} \label{incrRearrangement}

Before proceeding with the proof of Lemma \ref{rearrangementlemma}, we state a general rearrangement inequality 
of Brascamp, Lieb and Luttinger \cite{Brascamp}.
\begin{theorem}\label{BrascampsTheorem}
Let $f_{j},\;1<j<k$ be nonnegative measurable functions on $\mathds{R}$, and let $a_{jm},\;1<j<k,\;1<m<n$, be real numbers. 
Then, if ${f}^*$ is the symmetric decreasing rearrangement of $f$, we have:
\begin{equation}
\int\limits_{\mathds{R}^{n}}\mathrm{d}^{n}x\prod\limits_{j=1}^{k}f_{j}
\Bigg( \sum\limits_{m=1}^{n}a_{jm}x_{m}\Bigg)\leq \int\limits_{\mathds{R}^{n}}\mathrm{d}^{n}x\prod\limits_{j=1}^{k}{f}^{*}_{j}
\Bigg( \sum\limits_{m=1}^{n}a_{jm}x_{m}\Bigg)
\end{equation} 
\end{theorem}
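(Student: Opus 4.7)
The plan is to follow the original Brascamp-Lieb-Luttinger strategy, reducing the general statement to a geometric rearrangement inequality for indicator functions of finite unions of intervals.

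First I would apply the layer-cake decomposition: each nonnegative measurable $f_j$ can be written as $f_j(t)=\int_0^\infty \mathbb{1}_{\{f_j>s\}}(t)\,ds$, and the symmetric decreasing rearrangement commutes with this representation in the sense that $f_j^*(t)=\int_0^\infty \mathbb{1}_{\{f_j>s\}^*}(t)\,ds$. Expanding the product $\prod_j f_j(\sum_m a_{jm}x_m)$ into a $k$-fold integral over parameters $s_1,\dots,s_k$ and invoking Fubini, the inequality for general $f_j$ reduces to the same inequality applied to indicator functions of their level sets. Hence it suffices to establish
\begin{equation*}
\int_{\mathds{R}^n}\prod_{j=1}^k \mathbb{1}_{E_j}\Big(\sum_{m=1}^n a_{jm}x_m\Big)\,dx \;\leq\; \int_{\mathds{R}^n}\prod_{j=1}^k \mathbb{1}_{E_j^*}\Big(\sum_{m=1}^n a_{jm}x_m\Big)\,dx
\end{equation*}
for any measurable sets $E_j\subset\mathds{R}$ of finite measure, where $E_j^*=(-|E_j|/2,|E_j|/2)$.

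Next I would approximate each $E_j$ in measure by a finite union of disjoint open intervals of total length $|E_j|$ and pass to the limit using dominated convergence (one may first truncate to bounded support and use a uniform $L^\infty$ bound). The problem is thereby reduced to the case where every $E_j$ is a finite union of intervals, and $E_j^*$ is the single centered interval of the same total length. The heart of the argument is then to construct a continuous one-parameter deformation $t\mapsto E_j(t)$, $t\in[0,1]$, with $E_j(0)=E_j$ and $E_j(1)=E_j^*$, that slides the connected components of each $E_j$ towards the origin while preserving their individual lengths, and to show that the function
\begin{equation*}
\Phi(t)=\int_{\mathds{R}^n}\prod_{j=1}^k \mathbb{1}_{E_j(t)}\Big(\sum_{m=1}^n a_{jm}x_m\Big)\,dx
\end{equation*}
is non-decreasing in $t$. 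A convenient choice is to slide the rightmost gap of each $E_j$ closed while holding all component lengths fixed; differentiating $\Phi$ at a generic $t$ yields a boundary term that one shows is nonnegative by a geometric measure-theoretic argument on the cross-sections of the polytope $\{x:\sum_m a_{jm}x_m\in E_j\}$.

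The main obstacle will be the monotonicity step for $\Phi(t)$. Unlike Riesz's inequality, the presence of $k$ linear forms in $n$ variables means the cross-section structure at the moving boundary is combinatorially intricate, and one cannot simply invoke a one-dimensional convolution argument. Brascamp, Lieb and Luttinger handle this by an induction on the combinatorial complexity of the interval collections, using a pairwise ``two-interval'' lemma that reduces each sliding step to a finite computation on a bounded polyhedron in $\mathds{R}^n$. I would adopt this reduction: verify the two-interval base case by direct integration, then piece the general monotonicity together inductively, and finally compose the sequence of slidings to reach $E_j^*$ in the limit. Strictness considerations are not needed here, since only the weak inequality is required for the application to Lemma~\ref{rearrangementlemma}.
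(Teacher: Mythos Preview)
The paper does not supply a proof of this theorem at all: it is quoted as a known result of Brascamp, Lieb and Luttinger \cite{Brascamp} and invoked as a black box in the proof of Lemma~\ref{rearrangementlemma}. So there is no ``paper's own proof'' to compare against.

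Your proposal is a faithful outline of the original Brascamp--Lieb--Luttinger argument: layer-cake reduction to indicator functions, approximation by finite unions of intervals, and a continuous sliding deformation along which the functional is shown to be monotone. As a sketch this is correct, and it is exactly the source the paper is citing. One caveat: your description of the monotonicity step is somewhat vague (``a geometric measure-theoretic argument on the cross-sections''; ``a pairwise two-interval lemma''). The actual mechanism in \cite{Brascamp} is Steiner symmetrization in one variable at a time, combined with an induction on $n$ that reduces to the one-dimensional Riesz inequality; the ``sliding'' you describe is not quite how they organize the argument. If you intend to write out a full proof, you should either follow the Steiner-symmetrization induction of \cite{Brascamp} precisely, or make the sliding-monotonicity argument rigorous, which is more delicate than your sketch suggests. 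For the purposes of this paper, however, citing \cite{Brascamp} is entirely adequate and no proof is expected.
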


\begin{remark}
Theorem \ref{BrascampsTheorem} is nontrivial only if $k>n$. Otherwise, both integrals 
diverge and the inequality trivially holds. We will see in this section that $k>n$ in our case.
\end{remark}

\begin{proof}[Proof of Lemma \ref{incrRearrangementLemma}]
It is sufficient to prove that the increasing rearrangement of a 
profile decreases $\mathcal{W}_{\rm int}[p(\cdot)]$, since $\mathcal{W}_{\rm single}[p(\cdot)]$ is invariant 
under rearrangement. 

Theorem \ref{BrascampsTheorem} applies to symmetric decreasing rearrangements. Therefore it is convenient to first
``symmetrize'' the profile and the functional.
Consider a profile $p(\cdot)\in\mathcal{S}$
such that $p(z)\in[0,p_{\text{\tiny MAP}}]$ (due to Lemma \ref{truncationLemma}) and denote by $\hat{p}(\cdot)$
the function such that $\hat{p}(z)=p(z),\;z<R$ and $\hat{p}(z)=\hat{p}(2R-z),\;z>R$. The value $R$ is chosen (large enough) so that 
$p(R)$ is arbitrarily close to $p_{\text{\tiny MAP}}$. Note that $\hat p(\cdot)$ is integrable over $\mathds{R}$.


We recall the expression of $\mathcal{W}_{\rm int}[p(\cdot)]$ in (\ref{interactionTerm1}) and rewrite it as:
\begin{align}
\mathcal{W}_{\rm int}&[p(\cdot)] 
 \nonumber
 \\ 
= \frac{\epsilon}{l}&\lim_{R\rightarrow +\infty}\biggr\{\int\limits_{-\infty}^{R}\,\mathrm{d}z p(z)^{l}-\int\limits_{-\infty}^{R}\,\mathrm{d}z
\bigl( \int\limits_{0}^{1}\,\mathrm{d}u\;p(z+u) \bigr)^{l}\biggr\}.
\label{limFRP}
\end{align}
We now express both integrals in the bracket in terms of the symmetrized profile. For the first one, this is immediate 
\begin{align}\label{rear1}
 \int_{-\infty}^{R}\mathrm{d}z p(z)^{l} = \frac{1}{2}\int_{\mathds{R}}\mathrm{d}z \hat p(z)^{l}.
\end{align}
For the second one, some care has to be taken with the averaging over $u$ when $z$ is near $R$. One has
\begin{align}
 \int_{-\infty}^{R}\mathrm{d}z & \bigl( \int\limits_{0}^{1}\,\mathrm{d}u\;p(z+u) \bigr)^{l} 
 \nonumber \\ &
 = 
 \frac{1}{2}\int_{\mathds{R}}\mathrm{d}z \bigr( \int_{0}^{1}\,\mathrm{d}u\;\hat p(z+u) \bigl)^{l} + o(\frac{1}{R^l}).
 \label{rear2}
\end{align}
Replacing these two formulas in (\ref{limFRP}) we have the representation
\begin{align}
 \mathcal{W}_{\rm int}&[p(\cdot)] 
 \nonumber
 \\ &
= \frac{\epsilon}{2l}\int_{\mathds{R}}\,\mathrm{d}z \hat p(z)^{l}-\frac{\epsilon}{2l}\int_{\mathds{R}}\,\mathrm{d}z
\bigl( \int_{0}^{1}\,\mathrm{d}u\; \hat p(z+u) \bigr)^{l}.
\label{limFRPhat}
\end{align}

Now consider $\hat p^*(\cdot)$, the symmetric decreasing rearrangement of $\hat p(\cdot)$.
The first term in (\ref{limFRPhat}) is invariant under
rearrangement. It remains to prove that the second term in (\ref{limFRPhat}) increases
upon rearrangement. We express it as follows (dropping $\epsilon/2l$):
\begin{align*}
&\int\limits_{\mathds{R}}\mathrm{d}z\;\Big( \int\limits_{0}^{1}\mathrm{d}u\;\hat{p}(z+u) \Big)^{l}\\
&=\int\limits_{\mathds{R}}\mathrm{d}z\int\limits_{\mathds{R}^l}\prod_{i=1}^{l}\mathrm{d}u_{i}\;\hat{p}(z+u_{i})\mathbb{1}_{[0,1]}(u_{i})\\
&\overset{(b)}=\int\limits_{\mathds{R}}\mathrm{d}z'\int\limits_{\mathds{R}^l}\prod_{i=1}^{l}\mathrm{d}u'_{i}\;\hat{p}(z'+R+u'_{i}+\frac{1}{2})\,\mathbb{1}_{[-\frac{1}{2},\frac{1}{2}]}(u'_{i})\\
&\overset{(c)}\leq \int\limits_{\mathds{R}}\mathrm{d}z'\int\limits_{\mathds{R}^l}\prod_{i=1}^{l}\mathrm{d}u'_{i}\;\hat{p}^*(z'+R+u'_{i}+\frac{1}{2})\,\mathbb{1}_{[-\frac{1}{2},\frac{1}{2}]}(u'_{i})\\
&\overset{(d)}= \int\limits_{\mathds{R}}\mathrm{d}z\int\limits_{\mathds{R}^l}\prod_{i=1}^{l}\mathrm{d}u_{i}\;\hat{p}^*(z+u_{i})\,\mathbb{1}_{[0,1]}(u_{i})\\
&= \int\limits_{\mathds{R}}\mathrm{d}z\;\Big( \int\limits_{0}^{1}\mathrm{d}u\; \hat{p}^*(z+u) \Big)^{l},
\end{align*}
where the equality in (b) is due to the changes of variables $z'=z-R$ and $u'_{i}=u_{i}-\frac{1}{2};\;i=1\dots l$, the 
inequality in (c) is due
Theorem \ref{BrascampsTheorem}, and the equality in (d) is obtained by 
first remarking that the indicator function $\mathbb{1}_{[-\frac{1}{2},\frac{1}{2}]}(u'_{i})$ is 
unchanged upon rearrangement and then by making the reverse changes 
of variables $z=z'+R$ and $u_{i}=u'_{i}+\frac{1}{2};\;i=1\dots l$.
So far we have obtained 
 \begin{align*}
 & \mathcal{W}_{\rm int}[p(\cdot)] 
 \nonumber
 \\ &
\geq \frac{\epsilon}{2l}\int\limits_{\mathds{R}}\,\mathrm{d}z \hat{p}^*(z)^{l}-\frac{\epsilon}{2l}\int\limits_{\mathds{R}}\,\mathrm{d}z
\Big( \int\limits_{0}^{1}\,\mathrm{d}u\; \hat{p}^*(z+u) \Big)^{l}.
\end{align*}
To obtain 
$\mathcal{W}_{\rm
int}[p(\cdot)] \geq \mathcal{W}_{\rm int}[p^*(\cdot)]$ it remains to reverse the steps \eqref{limFRP}-\eqref{limFRPhat}.  
\end{proof}

\section{Strict Monotonicity of Profile} \label{strIncrSection}

We establish Lemma \ref{strIncrProp} as a corollary of the following lemmas.
\begin{lemma}\label{strIncrLemma1}
If $p(\cdot)$ minimizes $\mathcal{W}[p(\cdot)]$, then it satisfies the DE equation.
\end{lemma}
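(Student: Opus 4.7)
The approach is a standard first-variation (Euler--Lagrange) argument: at a minimizer the first variation of $\mathcal{W}$ must vanish, and the resulting stationarity condition will turn out to be exactly equation \eqref{dep}.

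The plan is to first reduce to unconstrained perturbations. By the rearrangement inequality (Lemma~\ref{rearrangementlemma}), for every $q\in\mathcal{S}$ one has $\mathcal{W}[q^*]\leq\mathcal{W}[q]$ with $q^*\in\mathcal{S}^\prime$, so $\inf_{\mathcal{S}^\prime}\mathcal{W}=\inf_{\mathcal{S}}\mathcal{W}$; in particular a minimizer $p(\cdot)\in\mathcal{S}^\prime$ is simultaneously a minimizer over all of $\mathcal{S}$, where no monotonicity constraint is imposed. I would then take any test function $\phi\in C_c^\infty(\mathbb{R})$ and observe that $p+t\phi\in\mathcal{S}$ for $|t|$ small enough: the tails at $\pm\infty$, continuity, and nonnegativity are all preserved since $\phi$ is compactly supported and $p$ is bounded (by Lemma~\ref{truncationLemma} one may assume $p\leq p_{\text{\tiny MAP}}$).

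The second step is to set $\frac{d}{dt}\mathcal{W}[p+t\phi]\big|_{t=0}=0$ and compute. Differentiation under the integral is justified by dominated convergence on the compact support of $\phi$, while the decay built into $\mathcal{S}$ handles the tails. The single-system part of $\mathcal{W}$ contributes $\int_{\mathbb{R}}dz\,\phi(z)[1-(1-p(z))^{1/(r-1)}]$, while the interaction part yields
\begin{equation*}
-\epsilon\int_{\mathbb{R}}dz\,\Big(\int_0^1 du\,p(z+u)\Big)^{l-1}\int_0^1 du\,\phi(z+u).
\end{equation*}
A Fubini exchange followed by the substitution $z\mapsto z-v$ (with $v$ the former averaging variable against $\phi$) recasts this as $-\epsilon\int_{\mathbb{R}}dz\,\phi(z)\int_0^1 dv\,\big(\int_0^1 du\,p(z+u-v)\big)^{l-1}$, so the full stationarity identity becomes
\begin{equation*}
\int_{\mathbb{R}}dz\,\phi(z)\Big[1-(1-p(z))^{\frac{1}{r-1}}-\epsilon\int_0^1 dv\Big(\int_0^1 du\,p(z+u-v)\Big)^{l-1}\Big]=0
\end{equation*}
for every $\phi\in C_c^\infty(\mathbb{R})$.

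Finally, I would invoke the fundamental lemma of the calculus of variations: the bracket is continuous in $z$ by continuity of $p$, so it must vanish identically, and this is precisely the DE equation \eqref{dep}. The only points requiring genuine attention are the justification of differentiation under the integral and of the Fubini interchange in the interaction term; both reduce to the uniform bound $p\leq p_{\text{\tiny MAP}}$ and the $\mathcal{S}$-decay of $p$ at $\pm\infty$, so the main ``obstacle'' is bookkeeping rather than a substantive difficulty.
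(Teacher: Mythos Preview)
Your argument is correct and computes exactly the same first variation as the paper; the difference lies only in how you extract the pointwise DE equation from it. You test against all $\phi\in C_c^\infty(\mathbb{R})$ and invoke the fundamental lemma of the calculus of variations. The paper instead tests against the single direction $\nu_p(z)$ equal to the negative of the Euler--Lagrange expression itself, so that the directional derivative becomes $-\int_{\mathbb{R}}(\text{EL expression})^2\,dz\le 0$; if this were strictly negative one could step a small amount in that direction and strictly decrease $\mathcal{W}$, contradicting minimality, hence the square must vanish identically. Your route is the textbook one and entirely routine; the paper's one-direction trick is marginally slicker in that it bypasses the fundamental lemma and only needs a one-sided perturbation. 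Your preliminary reduction from $\mathcal{S}'$ to $\mathcal{S}$ via rearrangement, which legitimizes two-sided unconstrained perturbations, is a nice explicit step that the paper leaves implicit. One small quibble: ``nonnegativity is preserved since $\phi$ is compactly supported and $p$ is bounded'' is not the right reason---what you need is that $p$ is bounded \emph{away from zero} on $\mathrm{supp}\,\phi$; this holds whenever $p>0$ there by compactness, and where $p$ vanishes the DE equation can be recovered from the one-sided inequality. The paper is equally informal on the analogous point for $p+\delta_0\nu_p$.
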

\begin{proof}
Consider a profile  $p(\cdot)$ and a function $\nu(\cdot)$ such 
that $\lim\limits_{z\rightarrow \pm\infty}\nu(z)=0$. We compute the directional derivative of the potential $\mathcal{W}[p(\cdot)]$ in the direction of $\nu(\cdot)$,
\begin{equation*}
\mathrm{d}\mathcal{W}[p(\cdot)][\nu]=\lim\limits_{\delta\rightarrow 0}\frac{\mathcal{W}[p(\cdot)+\delta\nu(\cdot)]-\mathcal{W}[p(\cdot)]}{\delta}.
\end{equation*}
A calculation gives
\begin{align*}
\mathrm{d}\mathcal{W}[p(\cdot)][\nu]=\int\limits_{\mathds{R}}&\mathrm{d}z\,\nu(z)\Big\{1-(1-p(z))^{\frac{1}{r-1}}\\
&-\epsilon\int_0^1\mathrm{d}v\Big(\int_0^1\mathrm{d}u\,p(z+u-v) \Big)^{l-1} \Big\}.
\end{align*}
Now consider the function 
\begin{align*}
\nu_p(z)=& -\Big\{1-(1-p(z))^{\frac{1}{r-1}}
\nonumber \\ &
-\epsilon\int_0^1\mathrm{d}v\Big(\int_0^1\mathrm{d}u\,p(z+u-v) \Big)^{l-1} \Big\}.
\end{align*}
The directional derivative of $\mathcal{W}[p(\cdot)]$ in the direction of $\nu_p(\cdot)$ satisfies
\begin{equation}\label{eqnDirDerivNeg}
\mathcal{W}[p(\cdot)][\nu_p]\leq 0
\end{equation}
because the integrand is a square. 
Now assume that $p(\cdot)$ is a minimizing profile.  
In the case where equality is met in \eqref{eqnDirDerivNeg}, $p(\cdot)$ satisfies the DE equation.
Consider the case where the inequality is strict. 
Then,
\begin{equation*}
\mathrm{d}\mathcal{W}[p(\cdot)][\nu_p]=\lim\limits_{\delta\rightarrow 0}\frac{\mathcal{W}[p(\cdot)+\delta\nu_p(\cdot)]-\mathcal{W}[p(\cdot)]}{\delta}<0,
\end{equation*}
and we can find $\delta_0$ small enough such that 
\begin{equation*}
\mathcal{W}[p(\cdot)+\delta_0\nu_p(\cdot)]<\mathcal{W}[p(\cdot)].
\end{equation*}
So $p(\cdot)$ cannot be a minimizing profile, and this concludes the proof by contradiction.
\end{proof}

\begin{lemma}\label{strIncrLemma3}
If $p(\cdot)$ is increasing and satisfies the DE equation, then it cannot be strictly flat
on an interval $]a, b[ \subset \mathds{R}$.
\end{lemma}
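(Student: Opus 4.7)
The plan is to show that flatness on any bounded interval would propagate all the way to the whole real line, contradicting the boundary conditions defining $\mathcal{S}$. Suppose for contradiction that $p(\cdot)\in\mathcal{S}^\prime$ satisfies the DE equation \eqref{dep} and is constant, equal to some value $c$, on an interval $]a,b[$.

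First I would exploit that the left-hand side of \eqref{dep}, namely $1-(1-p(z))^{1/(r-1)}$, is then constant on $]a,b[$, so the right-hand side must be constant there as well. Setting $Q(w)=\int_0^1 du\, p(w+u)$, the right-hand side can be rewritten as $\epsilon R(z)$ with $R(z)=\int_{z-1}^z dw\, Q(w)^{l-1}$. Continuity of $p$, required by the definition of $\mathcal{S}$, makes $Q$ continuous and $R$ continuously differentiable with $R'(z)=Q(z)^{l-1}-Q(z-1)^{l-1}$. The vanishing of $R'$ on $]a,b[$, combined with $Q\geq 0$ and $l\geq 2$, gives $Q(z)=Q(z-1)$ for $z\in ]a,b[$. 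Since $p$ is increasing, so is $Q$, and monotonicity then forces $Q$ to be constant on the enlarged interval $]a-1,b[$.

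From the distributional derivative $Q'(z)=p(z+1)-p(z)$ I deduce $p(z+1)=p(z)$ for $z\in ]a-1,b[$, and combined with monotonicity of $p$ this forces $p$ to be constant (still equal to $c$) on $]a-1,b+1[$. Iterating this bootstrap $k$ times yields $p\equiv c$ on $]a-k,b+k[$ for every $k\geq 0$, hence $p\equiv c$ on all of $\mathds{R}$, contradicting the boundary conditions $\lim_{z\to-\infty}p(z)=0$ and $\lim_{z\to+\infty}p(z)=p_{\text{\tiny MAP}}>0$ built into $\mathcal{S}$. The most delicate step to check is the implication $Q(z)^{l-1}=Q(z-1)^{l-1}\Rightarrow Q(z)=Q(z-1)$, which holds because $Q\geq 0$ and $x\mapsto x^{l-1}$ is strictly increasing on $[0,\infty)$ for $l\geq 2$; a secondary but routine point is justifying the passage between the equality $Q(z)=Q(z-1)$ on $]a,b[$ and the constancy of $Q$ on the union $]a-1,b[$, which is immediate since the intervals $[z-1,z]$ for $z\in ]a,b[$ overlap and $Q$ is monotone.
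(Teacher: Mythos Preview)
Your argument is correct and in several respects cleaner than the paper's own proof. Both proofs differentiate, but you do so in a tidier way: you first rewrite the right-hand side of \eqref{dep} via the substitution $w=z-v$ as $\epsilon R(z)$ with $R(z)=\int_{z-1}^{z}Q(w)^{l-1}\,dw$ and $Q(w)=\int_0^1 p(w+u)\,du$, then use the fundamental theorem of calculus to get $R'(z)=Q(z)^{l-1}-Q(z-1)^{l-1}$. From $R'\equiv 0$ on $]a,b[$ and monotonicity of $Q$ you obtain constancy of $Q$ on $]a-1,b[$, then $p(z+1)=p(z)$ on $]a-1,b[$, then constancy of $p$ on $]a-1,b+1[$; iterating gives $p\equiv c$ on all of $\mathds{R}$, contradicting the limits in $\mathcal{S}$. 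The paper instead differentiates both sides of \eqref{dep} directly, obtaining an identity in which $p'(z)$ appears explicitly on the left, and then shows by a three-case analysis (finite $a,b$; $a=-\infty$; $b=+\infty$) that the right-hand side cannot vanish near the boundary of a maximal flat interval, pushing the contradiction inward step by step.

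The trade-offs: your bootstrap is symmetric and avoids any case distinction, and the only analytic ingredient beyond monotonicity is that $p$ is continuous (so $Q$ is $C^1$ and the classical derivative $Q'(z)=p(z+1)-p(z)$ holds; you could drop the word ``distributional''). The paper's approach, by contrast, does not need to reach a globally constant profile to derive its contradiction---it gets a contradiction already at the edge of the flat spot---so it does not explicitly invoke the boundary conditions of $\mathcal{S}$ at $\pm\infty$. In the present context both are equally valid, since the lemma is applied only to profiles in $\mathcal{S}^\prime$, where those boundary conditions are part of the hypotheses.
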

\begin{proof}
If $p(\cdot)$ satisfies the DE equation, then
\begin{align*}
1-(1-p(z))^{\frac{1}{r-1}}=\epsilon\int_0^1\mathrm{d}v\Big(\int_0^1\mathrm{d}u\, p(z+u-v) \Big)^{l-1}
\end{align*}
By taking the derivative on each side, we find that 
\begin{align}
(1-p(z)&)^{\frac{1}{r-1}-1}\frac{1}{r-1}p'(z)=\epsilon\int_0^1\mathrm{d}v\,(l-1)\times\nonumber\\
\Big(\int_0^1\mathrm{d}u&\,p(z+u-v) \Big)^{l-2}\int_0^1\mathrm{d}w\,p'(z+w-v).\label{eqnDerivBothSides}
\end{align}
Notice that $\int_0^1\mathrm{d}w\,p'(z+w-v)=\int_0^1\mathrm{d}w\,\frac{\mathrm{d}}{\mathrm{d}w}p(z+w-v)=p(z+1-v)-p(z-v)$.

Now assume that there exists a flat spot of $p(\cdot)$ for $z\in]a,b[$ where it takes some value $p_{\rm\tiny flat}$. 
We consider ``maximal'' intervals $]a, b[$ such that $p(\cdot)$ takes 
values different than $p_{\rm\tiny flat}$ for all $z\not\in]a,b[$. On this flat spot, \eqref{eqnDerivBothSides} becomes
{\small
\begin{align}\label{derivzero}
0=\int_0^1\mathrm{d}v\,\Big(\int_0^1\mathrm{d}u&\,p(z+u-v) \Big)^{l-2}(p(z+1-v)-p(z-v)).
\end{align}
}
We will now show that this equality cannot be satisfied.

Let us first consider the case when $a$ and $b$ are finite. Since $]a, b[$ is maximal and $p(\cdot)$ is increasing, 
we know that $0<p_{\rm\tiny flat}<p_{\rm\tiny MAP}$, $p(z)<p_{\rm\tiny flat}$ for all $z<a$ and $p(z)>p_{\rm\tiny flat}$ for all $z>b$. Now 
let us fix $z\in [b-1+\delta_0,b]$, where $0<\delta_0<1$. For such a $z$, the 
equality \eqref{derivzero} holds. But for such $z$ and for all $0 \leq v <1$, 
$\int_{0}^1\mathrm{d}u\,p(z+u-v)\geq \int_{u>v}\mathrm{d}u\,p(z+u-v)\geq (1-v)p_{\tiny\rm flat} > 0$. Thus we should have  
$p(z+1-v)=p(z-v)$ for a.e $v\in[0,1]$. This is not possible. Indeed, 
take $v\in [0, \delta_1]$ with $0<\delta_1<\delta_0$ is small enough so that $z-v<b<z+1-v$
and thus $p(z+1-v) - p(z-v) > p(b + \delta_0 - \delta_1) - p(b) > 0$. 
These arguments prove that an increasing solution of DE cannot be flat for $z\in [b-1+\delta_0,b]$. 
We repeat the argument on $[b-k+\delta_0,b-(k-1)+\delta_0]$ for all $1<k<K$ such that $b-K+\delta_0<a$ and find 
that an increasing solution of DE cannot be flat on each of those intervals, 
and thus on $[b-(K-1)+\delta_0,b]$. Finally, we repeat the argument on the last interval $[a,b-(K-1)+\delta_0]$ and 
deduce that an increasing solution of DE cannot be flat on $[a,b]$.

Next, we consider the case when $a= -\infty$. In this case, we have 
that $p(z)=0$ for all $z\leq b$ and $p(z)>0$ for all $z>b$. The analysis is similar than the preceding one. First fix $z$ in the 
interval $\in [b-1 + \delta_0, b]$, $0<\delta_0<1$.  Equation \eqref{derivzero} is satisfied for such $z$. Now take
$v\in [0, \delta_1]$ with $0< 2\delta_1 < \delta_0$. Then $p(z+1-v) -p(z-v) = p(z+1-v) > p(b+\delta_0-\delta_1)>0$,  
and for $1-\delta_0 +2\delta_1 < u < 1$ we have $p(z+u-v) > p(b+\delta_1)$ so
$\int_0^1 du\, p(z+u -v) >  (1+\delta_0 - 2\delta_1) p(b +\delta_1) >0$. Thus the right hand side of \eqref{derivzero} does not vanish which is a contradiction. 
We carry out the same analysis 
as above for $z\in [b-k+\delta_0,b-(k-1)+\delta_0]$ for $k\in\mathbb{N}$, and thus deduce that $p(\cdot)$ cannot be flat on $[-\infty,b]$.

Finally, consider the case when $b= +\infty$. The analysis is essentially symmetric to the preceding one.
In this case we have $p(z)=p_{\rm\tiny MAP}$ for $z>a$ and 
$p(z)<p_{\rm\tiny MAP}$ for $z<a$. First fix $z$ in the interval $[a, a+1-\delta_0]$. For such $z$ \eqref{derivzero} holds.
For $v\in [1-\delta_1, 1]$ with $\delta_1 < \delta_0$ we have 
$p(z+1-v) - p(z-v) > p_{\rm \tiny MAP} - p(a - \delta_0  +\delta_1) >0$. Moreover it is clear that
$\int_0^1 du\, p(z+u -v) > 0$. So the right hand side of \eqref{derivzero} cannot vanish, and we arrive at a contradiction.
We repeat the argument for  $z\in [a+k-\delta_0 , a + (k+1) -\delta_0]$ for $k\in\mathbb{N}$, and conclude that $p(\cdot)$ cannot be flat on $[a, +\infty]$.

%
%
%
\end{proof}

\begin{proof}[Proof of Lemma \ref{strIncrProp}]
The proposition follows directly from Lemmas \ref{strIncrLemma1} 
and \ref{strIncrLemma3}.
\end{proof}

We now discuss a lemma that is interesting in itself but not necessary for our results.
\begin{lemma}\label{strIncrLemma2}
If $p(\cdot)$ minimizes $\mathcal{W}[p(\cdot)]$ then it cannot have a flat spot, i.e., $p(z)=p_{\tiny\rm flat}$ with $0<p_{\rm \tiny flat} <p_{\rm \tiny MAP}$
 in a bounded interval $z\in [a,b]$ such that $b-a>1$.
\end{lemma}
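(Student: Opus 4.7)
The plan is to combine Lemma \ref{strIncrLemma1} with the DE equation to show that a flat spot of length exceeding one, at a value strictly between $0$ and $p_{\rm\tiny MAP}$, forces $p$ to be constant on all of $\mathds{R}$, contradicting the asymptotic behaviour built into $\mathcal{S}$.

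First, Lemma \ref{strIncrLemma1} gives that a minimizer $p$ solves \eqref{dep}. Suppose, for contradiction, that $p(z)=p_{\rm\tiny flat}$ on $[a,b]$ with $b-a>1$ and $0<p_{\rm\tiny flat}<p_{\rm\tiny MAP}$. Then the LHS of \eqref{dep}, namely $1-(1-p(z))^{1/(r-1)}$, is constant on $[a,b]$, so the RHS must be too. Writing $g(y)\equiv \int_0^1\mathrm{d}u\,p(y+u)=\int_y^{y+1}p(x)\,\mathrm{d}x$, the RHS equals $\epsilon\int_{z-1}^{z}g(y)^{l-1}\,\mathrm{d}y$; differentiating in $z$ on $(a,b)$ gives $g(z)^{l-1}=g(z-1)^{l-1}$, and hence (by non-negativity of $g$) $g(z)=g(z-1)$ for all $z\in(a,b)$.

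Next, since $b-a>1$ the interval $[a,b-1]$ is non-empty, and on it $g\equiv p_{\rm\tiny flat}$ because $[y,y+1]\subseteq[a,b]$ whenever $y\in[a,b-1]$. The relation $g(z)=g(z-1)$ then propagates $g\equiv p_{\rm\tiny flat}$ to a strictly larger set (e.g.\ it adds the shifted copy $(a-1,b-2)$ of $[a,b-1]$). Differentiating the identity $g(y)=\int_y^{y+1}p(x)\,\mathrm{d}x=p_{\rm\tiny flat}$ gives $p(y+1)=p(y)$ on this enlarged set, which, combined with $p\equiv p_{\rm\tiny flat}$ on $[a,b]$, enlarges the flat region of $p$ strictly beyond $[a,b]$. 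Each enlargement in turn enlarges the set on which the LHS of \eqref{dep} is constant, hence the set on which $g(z)=g(z-1)$ is valid, and the argument bootstraps.

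Iterating, the flat region of $p$ grows without bound and ultimately covers all of $\mathds{R}$. This contradicts the defining asymptotics $\lim_{z\to-\infty}p(z)=0$ and $\lim_{z\to+\infty}p(z)=p_{\rm\tiny MAP}$ in the space $\mathcal{S}$, since $p_{\rm\tiny flat}$ differs from both. The main technical obstacle is the marginal regime $1<b-a\le 2$: a single round of propagation produces possibly disjoint flat pieces rather than a connected enlargement, and one has to track several rounds of alternating right- and left-propagation (relying on continuity of $p$) until the flat set contains a connected interval of length greater than $2$, at which point the clean ``extend by $1$ on each side per round'' behaviour of the regime $b-a>2$ takes over and fills out $\mathds{R}$.
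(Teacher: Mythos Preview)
Your route via the DE equation is genuinely different from the paper's, and the core idea is sound, but your stated resolution of the marginal regime $1<b-a\le 2$ is wrong.

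You claim that after several rounds the flat set will contain a connected interval of length exceeding $2$. It will not. Take $a=0$, $b=1.5$. One round of your propagation yields the flat pieces $[-1,-0.5]$ and $[2,2.5]$; a second round yields $[-2,-1.5]$ and $[3,3.5]$; and so on. The relation $g(z)=g(z-1)$ is available only on the interiors of the flat pieces of $p$, and it moves information by exactly one unit, so the pieces replicate at integer spacing and the gaps (each of length $2-(b-a)=0.5$) never close. Continuity of $p$ gives you nothing inside the gaps. No connected flat interval longer than the original $[0,1.5]$ ever appears.

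The good news is that you do not need connectedness. Your iteration does produce flat pieces $[a-k,\,b-k-1]$ for every $k\ge 1$, so $p(a-k)=p_{\rm flat}>0$ for all $k$, which already contradicts $\lim_{z\to-\infty}p(z)=0$. Replace the last paragraph's claim about merging into a long interval by the simpler observation that the flat set is unbounded below (and above), and the argument closes.

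For comparison, the paper does not touch the DE equation here at all. It builds an explicit competitor $\tilde p$ by excising a segment of length $b-1-a$ from the flat spot (leaving a residual flat piece of length exactly $1$) and checks directly that $\mathcal{W}_{\rm single}$ drops by $(b-1-a)\,W_s(p_{\rm flat})>0$ while $\mathcal{W}_{\rm int}$ is unchanged, the latter because the interaction kernel has range $1$ and therefore does not feel the excised portion. That argument is shorter and yields a concrete energy gap; yours, once repaired, has the mild advantage of not invoking any monotonicity hypothesis on $p$ and of tying the statement back to the DE picture.
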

\begin{proof}
Suppose that $p(\cdot)$ is an increasing minimizing profile and that it has a constant value $0<p_{\rm \tiny flat} <p_{\rm \tiny MAP}$ on a bounded interval of length 
greater than $1$. We will construct another profile that has strictly less energy. 

We start by expressing the single potential as follows
\begin{align*}
\mathcal{W}_{\rm single}[p(\cdot)]&=\int_{-\infty}^{a}\mathrm{d}z\,W_{\rm s}(p(z))+\int_{a}^{b-1}\mathrm{d}z\,W_{\rm s}(p(z))\\&\qquad\qquad +\int_{b-1}^{+\infty}\mathrm{d}z\,W_{\rm s}(p(z))\\
&=\int_{-\infty}^{a}\mathrm{d}z\,W_{\rm s}(p(z))+(b-1-a)W_{\rm s}(p(a))\\&\qquad\qquad +\int_{b-1}^{+\infty}\mathrm{d}z\,W_{\rm s}(p(z)).
\end{align*}
By applying the change of variables $z^\prime=z-(b-1)+a$ on the rightmost integral, we express it as
\begin{align*}
\int_{b-1}^{+\infty}\mathrm{d}z\,W_{\rm s}(p(z))=\int_{a}^{+\infty}\mathrm{d}z\,W_{\rm s}(p(z+b-1-a)).
\end{align*}
We define the profile $\tilde{p}$ by
\begin{equation*}
\tilde{p}(z)=
\begin{cases}
p(z) & \text{if}\quad z\leq a,
\\
p(z+b-1-a) & \text{if}\quad z>a,
\end{cases}
\end{equation*}
We thus obtain
\begin{align*}
\mathcal{W}_{\rm single}[p(\cdot)]=\mathcal{W}_{\rm single}[\tilde{p}(\cdot)]+(b-1-a)W_{\rm s}(p_{\rm \tiny flat}).
\end{align*}
Note that since $0<p_{\rm \tiny flat} <p_{\rm \tiny MAP}$ we have $W_{\rm s}(p_{\rm \tiny flat}) >0$. 
Thus $\mathcal{W}_{\rm single}[p(\cdot)] > \mathcal{W}_{\rm single}[\tilde{p}(\cdot)]$.

For the interaction potential, we prove that $\mathcal{W}_{\rm int}[p(\cdot)]=\mathcal{W}_{\rm int}[\tilde{p}(\cdot)]$. Indeed,
\begin{align}
\mathcal{W}_{\rm int}[p(\cdot)]&=\frac{\epsilon}{l}\int_{\mathds{R}}\mathrm{d}z\Big\{ p(z)^l-\Big(\int_0^1\mathrm{d}u \,p(z+u) \Big)^l\Big \}\nonumber\\
=&\frac{\epsilon}{l}\int_{-\infty}^{a}\mathrm{d}z\Big\{ p(z)^l-\Big(\int_0^1\mathrm{d}u \,p(z+u) \Big)^l\Big \}\label{eqnTerm1}\\
+&\frac{\epsilon}{l}\int_{a}^{b-1}\mathrm{d}z\Big\{ p(z)^l-\Big(\int_0^1\mathrm{d}u \,p(z+u) \Big)^l\Big \}\label{eqnTerm2}\\
+&\frac{\epsilon}{l}\int_{b-1}^{+\infty}\mathrm{d}z\Big\{ p(z)^l-\Big(\int_0^1\mathrm{d}u \,p(z+u) \Big)^l\Big \}\label{eqnTerm3}.
\end{align}
We use the same definition of $\tilde{p}(\cdot)$ as above, and denote the 
functionals in \eqref{eqnTerm1}, \eqref{eqnTerm2}, and \eqref{eqnTerm3} 
by $\mathcal{T}_1[p(\cdot)]$, $\mathcal{T}_2[p(\cdot)]$, and $\mathcal{T}_3[p(\cdot)]$ respectively. Observe that
\begin{itemize}
\item $\mathcal{T}_1[p(\cdot)]=\mathcal{T}_1[\tilde{p}(\cdot)]$ since
\begin{align*}
p(z) &= \tilde{p}(z) \;\quad\text{if}\quad z<a \\
p(z+u) &= \tilde{p}(z +u) \;\quad\text{if}\quad z<a, \,\,\text{and}\,\,0<u<1.
\end{align*}
\item $\mathcal{T}_2[p(\cdot)]=0$ since $p(z)=p(z+u)=p_{\tiny\rm flat}$ when $a<z<b-1$.
\item $\mathcal{T}_3[p(\cdot)]=\mathcal{T}_2[\tilde{p}(\cdot)]+\mathcal{T}_3[\tilde{p}(\cdot)]$ since, by the change of variables $z^\prime=z-(b-1)+a$, we have
\begin{align*}
\mathcal{T}_3[p(\cdot)]&=\frac{\epsilon}{l}\int\limits_a^{+\infty}\mathrm{d}z\Big\{p(z+b-1-a)^l\\
&\qquad-\Big(\int_0^1 \mathrm{d}u\,p(z+b-1-a+u) \Big)^l \Big\}\\
&=\frac{\epsilon}{l}\int\limits_a^{+\infty}\mathrm{d}z\Big\{\tilde{p}(z)^l-\Big(\int_0^1 \mathrm{d}u\,\tilde{p}(z+u) \Big)^l \Big\}.
\end{align*}
\end{itemize}
Thus $\mathcal{T}_1[p(\cdot)] + \mathcal{T}_2[p(\cdot)] + \mathcal{T}_3[p(\cdot)] = \mathcal{T}_1[\tilde p(\cdot)] + \mathcal{T}_2[\tilde p(\cdot)] 
+ \mathcal{T}_3[\tilde p(\cdot)]$

Combining these results we get 
\begin{equation*}
 \mathcal{W}[p(\cdot)]=\mathcal{W}[\tilde{p}(\cdot)]+(b-1-a)\,W_{\rm s}(p_{\tiny\rm flat})>\mathcal{W}[\tilde{p}(\cdot)]
\end{equation*}
\end{proof}

\section{Tightness of the minimizing sequence} \label{tightnessSection}

\begin{proof}[Proof of Lemma \ref{tightnesslemma}]
Consider a minimizing sequence of cdfs $p_n(\cdot)$, i.e., satisfying \eqref{minSequence}. Fix any $\delta>0$ and suppose that 
\begin{equation}\label{nottight}
 p_n(M) - p_n(-M) < (1-\delta)p_{\text{\tiny MAP}}.
\end{equation}
We will show that \eqref{nottight} implies that necessarily $M \leq c/\delta^2$ for a fixed constant $c>0$. Taking the 
contrapositive we find that: choosing 
$M_\delta = c^\prime / \delta^2$ with $c^\prime >c$ implies that any minimizing sequence satisfies 
$p_n(M_\delta) - p_n(-M_\delta) > (1-\delta)p_{\text{\tiny MAP}}$.

From Lemma \ref{poslem} we have
\small
\begin{equation}\label{first}
 \mathcal{W}[p_n(\cdot)]\geq \mathcal{W}_{\text{\small single}}[p_n(\cdot)]-\frac{\epsilon p_{\text{\tiny MAP}}^l}{2l} \geq 
 \int_{-M}^M dz\, {W}_{s}(p_n(z))-\frac{\epsilon p_{\text{\tiny MAP}}^l}{2l}.
\end{equation}
\normalsize
Now, assuming \eqref{nottight} there must be a mass at least $\delta p_{\text{\tiny MAP}}$ outside of the interval
$[-M, M]$. Thus we have $p_n(-M) \geq \delta p_{\text{\tiny MAP}}/2$ or $p_{\text{\tiny MAP}} - p_n(M) \geq \delta p_{\text{\tiny MAP}}/2$. 
Recall that $p_n\in\mathcal{S}^{\prime\prime}_0$ so $p_n(0) = p_{\rm\tiny MAP}/2$. Therefore in $[-M, 0]$ or in $[0, M]$
 the profile $p_n(z)$ must be $\delta p_{\text{\tiny MAP}}/2$ away from the minima $0$ and $p_{\text{\tiny MAP}}$ of ${W}_{s}$.
Moreover, one can check that
 $W_s(p)$ has a parabolic 
shape near the minima at $0$ and $p_{\text{\tiny MAP}}$ so that away from these minima $W_s(p) \geq C\delta^2 p_{\text{\tiny MAP}}^2/4$ 
for a constant $C>0$ depending only on $l$.
These remarks imply
\begin{align}\label{second}
\int_{-M}^M dz\, {W}_{s}(p_n(z)) \geq \frac{1}{4}M C\delta^2 p_{\text{\tiny MAP}}^2.
\end{align}
Since $p_n(\cdot)$ is a minimizing sequence, for $n$ large enough its cost must be smaller than the cost of a fixed reference profile, say
$\rho(z) = 0, z \leq 0$, $\rho(z) =p_{\text{\tiny MAP}}, z > 0$. More formally,
\begin{equation}\label{third}
 \mathcal{W}[p_n(\cdot)] < \mathcal{W}[\rho(\cdot)] = - \frac{\epsilon}{l(l+1)} p_{\text{\tiny MAP}}^l\,.
\end{equation}
Finally, combining \eqref{first}, \eqref{second} and \eqref{third} we find that 
\begin{equation}
 M \leq 2\frac{\epsilon(l-1)}{l(l+1)}\frac{p_{\text{\tiny MAP}}^{l-2}}{C\delta^2}.
\end{equation}
\end{proof}

\section{Explicit Expressions of Kernel Function}\label{explicitkernel}

In this section, we compute the kernel function $V$ of section \ref{subproblem} and illustrate some of its properties.
In particular we show that it is  strictly convex in a set of positive measure.

Recall that the function is totally symmetric under permutations. It is therefore enough to compute it in a fixed sector 
$S_{x}=\{\mathbf{x}=(x_1,\cdots,x_l):\,x_i\geq x_j \;\mathrm{if}\;i<j \}$. We express $V$ in terms of 
the distances $d_{1i}$, which are ordered such that $d_{1i}<d_{1j}\,\mathrm{if}\,i<j$.

Let us first discuss the explicit examples $l=2$ and $l=3$. For $l=2$ an explicit computation yields, 

\begin{equation*}
V_{(l=2)}(d_{12})=
\begin{cases}
-\frac{1}{2} & \text{if}\quad d_{12}\geq 1,
\\
-\frac{1}{2}+\frac{1}{6}(1-d_{12})^3 & \text{if}\quad d_{12}<1.
\end{cases}
\end{equation*}
By taking the second derivative it is easy to see that $V_{(l=2)}$ is convex everywhere, and strictly convex for $d_{12}<1$.

For $l=3$, we have $d_{12} < d_{13}$ and the computation yields, 
\begin{equation*}
V_{(l=3)}(d_{12},d_{13})=
\begin{cases}
V_{(l=2)}(d_{12}) & \text{if}\,\, d_{13}\geq 1,
\\
\\
-\frac{1}{2}+\frac{1}{6}(1-d_{12})^{3}\\
+\frac{1}{12}(1-d_{13})^{4}& \text{if}\,\,d_{13}<1.\\
+\frac{1}{6}d_{12}(1-d_{13})^{3}
\end{cases}
\end{equation*}
For $d_{13}<1$ the Hessian is
\begin{align*}
\left(\begin{array}{ccc}
\displaystyle 1-d_{12} & \displaystyle -\frac{1}{2}(d_{13}-1)^{2} \\
\displaystyle -\frac{1}{2}(d_{13}-1)^{2} & \displaystyle -(d_{13}-1)(1+d_{12}-d_{13})\\
\end{array}\right)
\end{align*}
and the corresponding eigenvalues are:
\begin{align*}
\lambda_{1,2}=\frac{1}{2}\{2&-2d_{13}-d_{12}d_{13}+d_{13}^{2}\pm\sqrt{\Delta}\},
\end{align*}
where 
\begin{align*}
\Delta=1&+4d_{12}^{2}-4d_{13}-8d_{12}d_{13}-4d_{12}^{2}d_{13}+10d_{13}^{2}\\
&+8d_{12}d_{13}^{2}+d_{12}^{2}d_{13}^{2}-8d_{13}^{3}-2d_{12}d_{13}^{3}+2d_{13}^{4}.
\end{align*}
A plot of the eigenvalues  shows that they are non-negative in the region $0\leq d_{12}\leq d_{13}\leq 1$. 
In fact, one eigenvalue is strictly positive everywhere in this region, and the other is strictly positive 
everywhere in this region except at the boundary $d_{13}=1$, where it becomes equal to zero. 
This is consistent with the fact that $V_{(l=3)}(d_{12},d_{13})=V_{(l=2)}(d_{12})$ when $d_{13}\geq 1$. For $d_{13}\geq 1$, the Hessian always has 
a vanishing eigenvalue, and a 
strictly positive one when $d_{12} <1$. For $d_{12} \geq 1$, the kernel $V_{(l=3)}(d_{12},d_{13})$ is constant and both eigenvalues vanish.
To summarize the kernel is always convex, and strictly convex for $0\leq d_{13} < 1$.

These results can be generalized for all $l$. We find the general expression of the kernel 
\begin{align*}
V_{l}(d_{12},\cdots,d_{1l})=\sum\limits_{k=2}^{l}\sum\limits_{m=k}^{l}&\frac{(1-d_{1m})^{m-k+3}}{(m-k+3)(m-k+2)}\\
&\times\Big(\sum\limits_{\mathcal{S}\subseteq\{2,\ldots,m-1\}}\prod\limits_{n \in \mathcal{S}}d_{1n}\Big).
\end{align*}
The corresponding Hessian $(H_{ij})$ is a symmetric matrix of dimension $(l-1)\times(l-1)$ with matrix elements that are 
polynomials in $d_{1i}$, $i=1,\dots, l$. In particular, at $d_{1i}=0$ for all $i$ we have 
$H_{ii}=1$ and $H_{ij}=-\frac{1}{j+1}+\sum_{m=j+1}^{l}\frac{1}{(m-1)(m-2)}=-\frac{1}{l-1};\,j>i$. 
Defining $\mathbf{v}$ as the $(l-1)$-dimensional vector of $1$'s and denoting by $\mathbb{1}$ the $(l-1)$-dimensional identity matrix, 
we remark that $H$ at the origin can be expressed as
\begin{equation*}
H=(1+\frac{1}{l-1})\mathbb{1}-\frac{1}{l-1}\mathbf{v}\mathbf{v}^\mathrm{T}.
\end{equation*}
The eigenvalues of this matrix are $1+\frac{1}{l-1}$ and $1$ (with $1+\frac{1}{l-1}$ having degeneracy $l-2$).
Since these eigenvalues are strictly positive, the Hessian is strictly positive definite at the origin, and thus (by continuity)
also in a small neighborhood of the origin. Thus $V$ is a strictly convex function of $d_{1i}$, $i=2,\dots, l$ in a small neighborhood of the origin.

\noindent{\bf Acknowledgment.} We thank Marc Vuffray for discussions
on rearrangement inequalities and for bringing reference \cite{Brascamp}
to our attention. We in particular thank Tom Richardson for pointing
out the subtle way in which uniqueness can fail in a system despite
the strict displacement convexity of its underlying functional.


\begin{thebibliography}{1}
\bibitem{Zigangirov} A.J. Felstrom and K.S. Zigangirov, ``Time-varying periodic convolutional codes with low-density parity-check matrix," IEEE Trans. Inform. Theory, vol. 45, no. 5, (1999) pp. 2181-2190.
\bibitem{ShriniWhy} S. Kudekar, T. Richardson, R. Urbanke, ``Threshold saturation via spatial coupling:
Why convolutional LDPC ensembles perform so well over the BEC", in \emph{Information Theory Proceedings (ISIT)}, (2010) pp. 684-688.
\bibitem{Pfister} A. Yedla, Y. Jian, P. Nguyen, and H. Pfister, ``A simple 
proof of threshold saturation for coupled scalar recursions,"  e-print: arXiv: 1204.5703[cs] (2012).
\bibitem{KRU12} S. Kudekar, T. Richardson, R. Urbanke, ``Wave-Like Solutions of General One-Dimensional Spatially Coupled Systems,''  arXiv: 1208.5273[cs] (2012)
\bibitem{Shrinivas} S. Kudekar, T. Richardson, and R. Urbanke, ``Spatially coupled ensembles universally achieve capacity under belief propagation,"  arXiv: 1201.2999. (2012)
\bibitem{PfisterMacris} S. Kumar, A.J. Young, N. Macris, H.D. Pfister, ``A Proof of Threshold Saturation for Spatially-Coupled LDPC Codes on BMS Channels," in \emph{Allerton}, Monticello, Il, (2012).
\bibitem{McCann} R. McCann, ``A convexity principle for interacting gases," Adv. Math. 128(1), (1997) pp. 153-179.
\bibitem{Villani} C. Villani, ``Displacement Interpolation and Displacement Convexity," in \textit{Topics in Optimal Transportation}, Rhode Island: American Mathematical Society (2003), chapter 5, pp. 143-165.
\bibitem{Alberti} G. Alberti and G. Bellettini, ``A nonlocal anisotropic model for phase transitions Part I: The optimal profile problem," Math. Annalen. 310, (1998) pp. 527-560.
\bibitem{Carlen} E.A. Carlen, M.C. Carvalho, R. Esposito, J.L. Lebowitz, and R. Marra, ``Displacement Convexity and Minimal Fronts at Phase Boundaries," Arch. Rational Mech. Anal. 194, (2009) pp. 823-847.
\bibitem{Hassani} S.H. Hassani, N. Macris, R. Urbanke, ``Chains of Mean-Field Models," Journal of Statistical Mechanics: Theory and Experiment (2012) (02), P02011.
\bibitem{Brascamp} H.J. Brascamp, E.H. Lieb, and J.M. Luttinger. ``A general rearrangement inequality for multiple integrals," J. Funct. Anal., vo. 17, (1974) pp. 227-237.
\bibitem{Dacorogna} B. Dacorogna, \emph{Direct Methods in the Calculus of Variations}, 2nd ed. New York: Springer-Verlag, (1992).
\bibitem{Fonseca} I. Fonseca, G. Leoni, \emph{Modern Methods in the Calculus of Variations: Lp Spaces}, 
1st ed. New York: Springer, (2007).
\bibitem{burchard} A. Burchard, "Cases of equality in the Riesz rearrangement inequality," Annals of Mathematics, 143, (1996) pp. 499-527.
\bibitem{taryn-flock} M. Christ, T. Flock, ''Cases of equality in certain 
multilinear inequalities of Hardy-Riesz-Brascamp-Lieb-Luttinger type,`` arXiv:1307.8439[math.CA] (2013)
\bibitem{TomRich} T. Richardson, Private communication, July 2013.
\bibitem{Alber} G. Alberti, ''Some remarks about a notion of rearrangement``, ann. Scuola Norm. Sup. Pisa Cl. Sci, XXIX, pp. 457-472 (2000)

\end{thebibliography}
\end{document}